\def\BibTeX{{\rm B\kern-.05em{\sc i\kern-.025em b}\kern-.08em
    T\kern-.1667em\lower.7ex\hbox{E}\kern-.125emX}}
\newtheorem{theorem}{Theorem}
\begin{document}

\title{Task-Agnostic Learnable Weighted-Knowledge Base Scheme for Robust Semantic Communications}
\author{Shiyao~Jiang,
        Jian~Jiao,~\IEEEmembership{Senior Member,~IEEE,}
        Xingjian~Zhang,~\IEEEmembership{Member,~IEEE,}
        Ye~Wang,~\IEEEmembership{Member,~IEEE,}
        Dusit~Niyato,~\IEEEmembership{Fellow,~IEEE,}
        and~Qinyu~Zhang,~\IEEEmembership{Senior Member,~IEEE}

\thanks{Shiyao Jiang, Jian Jiao, Xingjian Zhang, and Qinyu Zhang are with the Guangdong Provincial Key Laboratory of Aerospace Communication and Networking Technology, Harbin Institute of Technology (Shenzhen), Shenzhen 518055, China, and also with the Pengcheng Laboratory, Shenzhen 518055, China (e-mail: jiang\_shiyao@foxmail.com; jiaojian@hit.edu.cn; x.zhang@hit.edu.cn; zqy@hit.edu.cn).}
\thanks{Ye Wang is with Pengcheng Laboratory, Shenzhen 518055, China~(e-mail: wangy02@pcl.ac.cn).}	
\thanks{Dusit Niyato is with the College of Computing and Data Science, Nanyang Technological University, Singapore (email: dniyato@ntu.edu.sg).} 

}

\maketitle

\begin{abstract}

With the emergence of diverse and massive data in the upcoming sixth-generation (6G) networks, 
the task-agnostic semantic communication system is regarded to provide robust intelligent services. 
In this paper, we propose a task-agnostic learnable weighted-knowledge base semantic communication (TALSC) framework for robust image transmission to address the real-world heterogeneous data bias in KB, including label flipping noise and class imbalance.
The TALSC framework incorporates a sample confidence module (SCM) as \emph{meta-learner} and the semantic coding networks as \emph{learners}. 
The \emph{learners} are updated based on the empirical knowledge provided by the learnable weighted-KB (LW-KB). 
Meanwhile, the \emph{meta-learner} evaluates the significance of samples according to the task loss feedback, and adjusts the update strategy of \emph{learners} to enhance the robustness in semantic recovery for unknown tasks.
To strike a balance between SCM parameters and precision of significance evaluation, 
we design an SCM-grid extension (SCM-GE) approach by embedding the Kolmogorov-Arnold networks (KAN) within SCM, 
which leverages the concept of spline refinement in KAN and enables scalable SCM with customizable granularity without retraining.
Simulations demonstrate that the TALSC framework effectively mitigates the effects of flipping noise and class imbalance in task-agnostic image semantic communication, achieving at least 12\% higher semantic recovery accuracy (SRA) and multi-scale structural similarity (MS-SSIM) compared to state-of-the-art methods.

\end{abstract}

\begin{IEEEkeywords}
Semantic communications, learnable weighted-knowledge bases, sample confidence module, meta-learning, Kolmogorov-Arnold networks.
\end{IEEEkeywords}

\section{Introduction}

In the upcoming sixth-generation (6G) networks, semantic communication (SemCom) enables efficient and intelligent transmission by focusing on conveying the meaning of information rather than raw data, and supports various advanced services, such as object detection, image classification and segmentation \cite{6G, AIforSem}. 
Existing task-specific SemCom systems typically leverage Joint Source and Channel Coding (JSCC) with a well-aligned Knowledge Base (KB) to capture task-specific knowledge for tailored transmissions \cite{Task-SC, Task-SC2, Task-SC3}.
However, these paradigms that employ a model for a specific task might be limited, since the semantic coding network has to be updated once the task or environment is changed.
In light of these challenges, task-agnostic semantic communication (TASC) has the potential to support diverse intelligent services by incorporating limited receiver feedback without the need for task-specific refinements.

The ability to achieve robust image transmission for TASC relies on maintaining alignment between the KB and actual distribution of the observation space \cite{SC-survey, SC-survey2, GSC}. 
However, the emergence of heterogeneous data biases in KB complicates this alignment, particularly when environments and tasks are diverse. 
These biases may arise from variations in data sources \cite{ImgNet, imbalancedLearning,noisyLabel}, differences in collection conditions \cite{dataOpt,dataCollect}, or task-induced shifts in data patterns \cite{transferLearning, transferLearning2}. 
As a result, semantic coding networks often overfit to biased KBs, leading to poor performance during transmission.
On one hand, adaptive and weighted learning methods address this challenge by dynamically adjusting the importance of samples based on their relevance or reliability \cite{L2RW, metaLearnedConfidence}. 
While effective in mitigating data bias, these methods require prior knowledge or predefined rules to assign appropriate weights.
In contrast, robust learning methods for SemCom mitigate the impact of biased KBs by incorporating contaminated samples into the training dataset and considering the influence of semantic noise during the training process \cite{VQ-VAE, adv-robust}.
These methods help the model cope with biased KBs but tend to incur significant computational overhead, and may face scalability challenges when applied to diverse, evolving data bias distributions.
Meta-learning, particularly through the use of hyper-parameter prediction functions conditioned on tasks \cite{learn-hyperpred}, can offer a more flexible solution by enabling the evaluation of sample significance in KB.
This allows the model to prioritize task-relevant or high-quality data, improving robustness against heterogeneous data biases. 
Therefore, meta-learning methods stand out for their ability to rapidly adapt to varying data bias distributions with minimal adjustments, reducing the need for frequent re-training or KB updates.

\subsection{Related Work}

Achieving efficient TASC requires a robust KB capable of addressing the bias distributions between the statistical characteristics of the data to be transmitted and those of the empirical data in KB.
While updating the KB by incremental learning \cite{incremental-learning} and domain adaptation \cite{domainAdaptation} can improve performance, it typically involves significant communication and computational overheads. 
Ideally, the receiver can feed all of its local empirical semantic information that is relevant to its personalized tasks back to the transmitter, but this procedure incurs high costs and raises privacy concerns. 
Although robust learning methods enhance system performance, they often require additional re-training resources  \cite{VQ-VAE, adv-robust, robustLearning}. 
To minimize retraining overhead, a receiver-leading semantic coding framework is employed in conjunction with transmitter-side domain adaptation \cite{task-unaware}, though performance may degrade under severe task-data misalignment.  
Motivated by recent advances in meta-learning \cite{metaLearningReview}, several works have been proposed to automatically learn the adaptive weighting schemes with the aid of additional meta data \cite{MentorNet, MWNet,  CMWNet}.
Although these schemes do not explicitly consider TASC scenarios, they reveal the potential of meta-learning to handle the heterogeneous data biases in KB.

Recently, the authors in \cite{learn-hyperpred} proposed a framework for learning hyper-parameter prediction functions, which can guide the evaluation of sample significance and facilitate more effective updates to the KB in dynamic environments.
Similarly, the authors in \cite{AI-task} proposed constructing the KB by assigning importance weights to the feature maps (i.e., semantic information) for each class. 
This approach allows the semantic coding network to prioritize the transmission of important feature maps by jointly considering both bandwidth and performance requirements for robust transmission.
However, they only considered gradients as a measure of sample significance in KB, which has limitations in capturing more complex relationships between samples. 
Therefore, a weighted scheme for sample significance evaluation and KB updates is essential to guide semantic coding in TASC, ensuring robustness against heterogeneous data biases. 

Moreover, to construct a learnable weight mapping, the authors in \cite{MWNet} proposed parameterizing the weighting function as a multilayer perceptron (MLP) network, which benefits from the flexibility and generalization ability granted by the universal approximation theorem \cite{UAT}. 
However, MLP network requires large amounts of empirical data to avoid overfitting and may struggle with scalability in high-dimensional tasks. 
In contrast, inspired by the evolution of Kolmogorov-Arnold Representation Theorem (KART) \cite{KART, KART-study, KART-study2, KART-study3}, 
Kolmogorov-Arnold Networks (KANs) provide stronger theoretical guarantees for modeling complex, high-dimensional relationships, but they face challenges in high computational complexity in training \cite{KAN, KAN-survey}. 
Thus, MLPs are well-suited for tasks where flexibility and quick adaptation are needed, whereas KANs offer advantages in environments requiring more robust, theoretically grounded function approximations, but at the cost of higher computational demands for training.
This highlights the need to select the most appropriate architecture based on the specific requirements of the task, balancing between generalization ability and computational efficiency.

\begin{figure*}[t]
	\centering
	\includegraphics[width=0.9\textwidth]{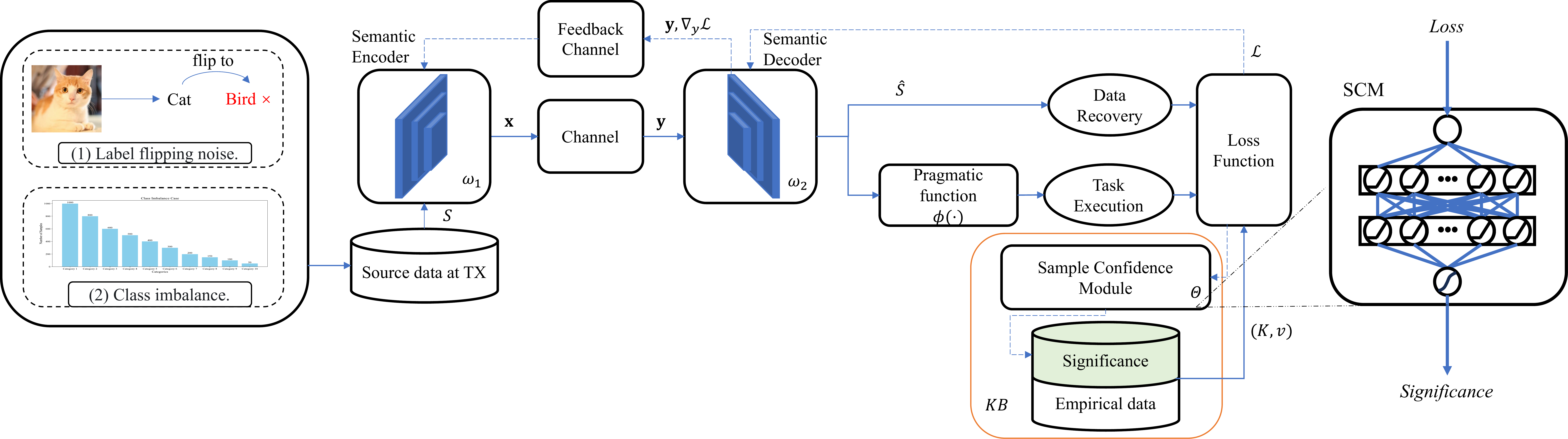}
	\caption{Our TALSC framework for robust image semantic transmission.
	At the transmitter side, when the receiver initiates a update request, the transmitter samples the empirical samples $ S $ from $\mathcal{K}$ and is encoded by $ f(\cdot) $ to obtain the channel input $ \mathbf{x} $,  and $ \mathbf{x} $ is transmitted to the receiver.  
	At the receiver side, the channel output is $ \mathbf{y} $,  and the receiver recovers the distorted encoded information $ \widehat{\mathbf{x}} $ based on CSI.
	Then, based on the decoder $ g(\cdot) $, the distorted empirical samples $ \widehat{S} $ is reconstructed as $\widehat{S} = g(\widehat{\mathbf{x}})$.
The distorted output of the pragmatic task is reconstructed through the pragmatic function $\phi(\cdot)$ as $\widehat{Z} = \phi(\widehat{S})$. 
Subsequently, the SCM estimates the significance of $ S $ to compute a weighted loss $\mathcal{L}$. 
Finally, receiver-driven updates for the semantic coding networks are conducted through the feedback channel.
	}
	\label{sys}
	\vspace{-8pt}
\end{figure*}

\subsection{Motivations and Contributions}
Motivated by above, we propose a Task-Agnostic Learnable weighted-knowledge base for Semantic Communication (TALSC) framework for robust image transmission.
Our contributions are summarized as follows.

\begin{itemize}

\item 
We propose a TALSC framework for robust image semantic transmission, which incorporates a well-designed sample confidence module (SCM) as \emph{meta-learner}, and the semantic decoding network as \emph{learner}. 
In each transmission, the receiver employs a pragmatic function to calculate the \emph{task loss} based on the output of \emph{learner}. 
We implement a significance evaluation function (SEF) in \emph{meta-learner} to map task loss to determine the significance of each sample in KB.
Then, the \emph{learner} is temporarily updated based on the sample significance and \emph{task loss}, which can suppress low-significance samples due to label flipping noise and class imbalance data.
Subsequently, the receiver applies the pragmatic function to calculate the \emph{meta-loss} by the updated output from \emph{learner}, and the SCM is updated according to \emph{meta-loss}. 
The updated SCM recalculates the significance by utilizing the \emph{task loss}, and the \emph{learner} is finally updated based on the recalculated sample significance and \emph{task loss}. 
The updated \emph{learner} outputs the final result in this transmission.
The proposed TALSC framework enables adaptive and robust semantic communications by integrating metadata-guided sample significance evaluation with knowledge-aware learner updates, effectively enhancing generalization across heterogeneous and noisy data distributions.

\item 
We first construct the SEF in SCM via the MLPs for high practicality, 
and analyze the approximation capability of MLP-based SEF for general functions by the Universal Approximation Theorem (UAT).
To represent the SEF more efficiently and accelerate its optimization process with greater flexibility, 
we construct the SEF by utilizing the KAN networks. 
By leveraging the spline-based structure of KAN, we derive an upper bound on the approximation error of KAN-based SEF, 
and the decay rate of approximation error is determined by the smoothness of the target SEF and the spline order. 
We further propose a Grid Extension (GE) approach for KAN-based SEF to achieve higher resolution and capture fine-grained details for lower \emph{meta-loss}. 
The GE approach first utilizes a low-dimensional parameter set to construct a coarse grid SEF. 
Then, we derive an analytical transformation matrix to extend a finer grid SEF according to the minimal mean square error (MMSE) criterion.
Finally, fine-tuning parameters of the extended fine grid SEF further refines the representation of \emph{meta-learner}, thereby reducing the average \emph{meta-loss}.

\item
We conduct extensive simulations to demonstrate the efficiency and robustness of our TALSC framework.
First, we show that the KAN-based SCM achieves lower average \emph{meta-loss} compared to the MLP-based SCM under equivalent parameter sizes, indicating that KAN-based representations offer superior efficiency in adjusting the \emph{learner}.
Additionally, 
the GE approach extended fine grid SEF can obtain lower \emph{meta-loss} than that of coarse grid SEF, while maintaining similar costs  in terms of training time and memory usage.
Furthermore, we conduct simulations for our TALSC framework versus the label flipping noise, i.e., impact of varying flipping noise rates (FNRs) on \emph{learner}.
The results demonstrate that the TALSC framework can maintain high semantic recovery accuracy (SRA) and multi-scale structural similarity index measure (MS-SSIM) even under high FNR compared to task agnostic DeepJSCC. 
Finally, simulation results for class imbalance also reveal that the TALSC framework enhances the F1-score for minority classes, thereby improving the extraction of semantic information from underrepresented classes.

\end{itemize} 

The rest of this paper is structured as follows.
Section \ref{LW-KB} introduces the TALSC framework for robust image semantic transmission.
In Section \ref{SCM-KAN}, we construct the SCM based on MLP and KAN, respectively, and design the Grid Extension approach for the KAN-based SCM.
In Section \ref{sim}, we simulate different cases of KB biases and prove the effectiveness of our TALSC framework compare to the state-of-the-art schemes.
Finally, we draw our conclusions in Section \ref{conclusion}.

\section{System Model and TALSC Framework} \label{LW-KB}

In this section, we present our TALSC framework, which comprises the semantic coding networks for transmission and an SCM to evaluate the significance of KB source data.  
The KB that integrates a learnable significance for its empirical data is referred to as the learnable weighted KB (LW-KB).
The transmitter and receiver are regarded as unequal participants in our TALSC framework, where the receiver is the dominant one with more resources and data authorization privilege, and we assume that the receiver has enough power to achieve a noiseless feedback channel.
In Section \ref{A}, we introduce the transmission model for semantic coding networks, including the task-agnostic pragmatic function, the semantic coding network based on convolutional neural networks (CNNs), and the modeling of three typical communication channels. 
They include additive white Gaussian noise (AWGN), Rayleigh, and Rician channels. 
The adaptive weighting design for semantic coding network updates is discussed in Section \ref{B}, where we introduce the weighted loss functions to account for the significance during updates.
Furthermore, considering task unawareness and privacy preservation at the transmitter \cite{task-unaware}, we establish a receiver-driven process to guide transmitter updates.
Section \ref{C} introduces the evaluation model and optimization of the SCM, while Section \ref{D} proposes the update mechanism of TALSC framework.

\subsection{Transmission Model for Semantic Coding Networks} \label{A}

The system flow for robust task-agnostic image semantic transmission is shown in Fig. \ref{sys}.
Let $ f(\cdot) $ and $ g(\cdot) $ represent the semantic encoder and decoder, respectively. 
We model the semantic encoder $ f_{\omega_1}(\cdot) $ and decoder $ g_{\omega_2}(\cdot) $ as two deep neural networks (DNNs), where $\omega_1$ and $\omega_2 $ are their corresponding parameters, respectively.
There are no specific requirement on the network structure, and we use the CNNs \cite{CNN} to implement the semantic coding networks.
The pragmatic function at the receiver serves as a flexible mathematical abstraction that encapsulates the core principles of tasks to enable TASC,  
and we focus on classification and image recovery tasks in this paper. 
We employ a pre-trained lightweight GoogleNet as the pragmatic function \cite{GoogleNET}, which has efficient inception modules and depthwise separable convolutions to provide robust support for classification tasks, while maintaining the adaptability.
For the communication channels, we primarily focus on the AWGN channel, and also analyze the Rayleigh and Rician fading channels. 
The complex symbol $ \mathbf{x} \in \mathbb{C}^{B \times 1} $ is encoded by the semantic  encoding network at the transmitter and transmitted over the AWGN channel, 
and the received symbol $ \mathbf{y} \in \mathbb{C}^{B \times 1} $ can be expressed as:
\begin{equation}
\mathbf{y} = \mathbf{x} + \mathbf{n}, 
\end{equation}
where $ \mathbf{n} $ represents $ \mathcal{CN}(0, \sigma_n^2) $ AWGN with $ \sigma_n^2 $ as the  variance. 

For the  block fading channel, the received signal $ \mathbf{y} $ can be expressed as:
\begin{equation}
\mathbf{y} = h \mathbf{x} + \mathbf{n}, 
\end{equation}
where $ h $ denotes the channel coefficient and remains constant during the transmission of a batch of images, 
and $ h $ follows a complex normal distribution $ \mathcal{CN}(0, 1) $ as Rayleigh fading  \cite{VQSC, DeepJSCC-f}.
In Rician fading channel,  $ h $ is a complex normal distribution $ \mathcal{CN}(\mu, \sigma^2) $ with $ \mu = \sqrt{\frac{r}{r+1}} $ and $ \sigma = \sqrt{\frac{1}{r+1}} $, where $r$ is the Rician coefficient \cite{VQSC}. 
Assuming that the receiver has perfect channel state information (CSI), the recovered symbol can be expressed as:
\begin{equation} \label{x-recover}
\widehat{\mathbf{x}} = \frac{\overline{h}\cdot \mathbf{y}}{\| h \|_2^2}, 
\end{equation}
where $\overline{h}$ represents the complex conjugate.

As shown in Fig. \ref{sys}, the task-agnostic image semantic transmission in TALSC framework is as follows. 
At the transmitter side, when the receiver initiates a data request, the transmitter samples the observational data $ S $ from the observation space, and  $ S $ is similar to the empirical data $ K $ in KB and is encoded by $ f(\cdot) $ to obtain the channel input $ \mathbf{x} $,  and $ \mathbf{x} $ is transmitted to the receiver.  
For RGB-format image sources, both  $ S $ and  $ K $ are modeled as three-dimensional vector sequences with 8-bit symbols.
At the receiver side, the channel output is $ \mathbf{y} $,  and the receiver recovers the distorted encoded information $ \widehat{\mathbf{x}} $ based on CSI.
Note that the power of the channel noise $\mathbf{n}$ is fixed but unknown. 
Then, based on the decoder $ g(\cdot) $, the distorted observable data $ \widehat{S} $ is reconstructed as
$\widehat{S} = g(\widehat{\mathbf{x}}).$
The distorted output of the pragmatic task is reconstructed through the pragmatic function $\phi(\cdot)$ as $\widehat{Z} = \phi(\widehat{S})$. This process ensures that the receiver generates a pragmatic output $\widehat{Z}$ from the reconstructed observable data $\widehat{S}$, enabling task execution despite noise and distortion in the task-agnostic image semantic transmission.

\subsection{Adaptive Weighting Design for Semantic Coding Network Update} \label{B}

In this part, we assume the existence of a perfect SCM capable of evaluating sample significance to reflect the value of each sample, prioritizing data with higher significance for updating the semantic coding networks. 
Consequently, the KB $\mathcal{K}$ consists of empirical data $ K_i $  along with their corresponding significance $v_i$ ($1\leq i \leq N$), which  enhancing the robustness and effectiveness of the TALSC framework, where $ N $ denotes the total number of empirical samples in KB. 
Details on the optimization of SCM are provided in Section \ref{C}.

Consider a semantic image transmission system involving a biased KB $\mathcal{K} = \{ K_i, v_i \}$ ($1\leq i \leq N$) and $ K_i = \{s_i, z_i\} $, where $ s_i $ represents the $ i $-th empirical sample in $\mathcal{K}$, and $ z_i \in \{0,1\}^C $ is the corresponding one-hot encoding. 
Note that $ s_i $ undergoes semantic encoding network $ f(\cdot) $, channel transmission, and semantic decoding network $ g(\cdot) $ in sequence, the task loss obtained from $ \phi(\cdot) $ can be expressed as:
\begin{equation} \label{pragmatic-loss}
L^{(\phi)}(s_i, z_i; \omega) = (1-\lambda)l_{CE}(\phi(\hat{s_i}), z_i)+\gamma \lambda l_{MSE}(\hat{s_i}, s_i),
\end{equation}
where $ l_{CE}(\cdot) $ and $ l_{MSE}(\cdot) $ denote the cross entropy (CE) loss and mean square error (MSE) loss, respectively, and $ \hat{s_i} = \hat{s_i}(s_i; \omega, h, \sigma_n^2) $ is the reconstructed observable data at the receiver.   
$ \gamma $ is a hyper-parameter to scale the value of MSE loss for the alignment with CE loss, and $ \lambda $ is a tradeoff ratio. Following the setup from \cite{task-unaware}, we define the tradeoff ratio as $\lambda=1-\text{compression rate}$.
For simplicity, $ L^{(\phi)}(s_i, z_i; \omega) $ is abbreviated as $ L_i^{(\phi)}(\omega) $ in the following.

Typically, the optimal semantic coding parameters $ \omega^* $ can be obtained by minimizing the expected $ L_i^{(\phi)}(\omega) $ as follows:
\begin{equation}
\omega^* = \arg\min_{\omega} \frac{1}{N} \sum_{i=1}^{N} L_i^{(\phi)}(\omega).
\end{equation}

In the presence of $\mathcal{K}$, the TALSC framework aims to enhance the robustness of semantic coding network by applying $ v_i \in [0,1] $ to $ L_i^{(\phi)}(\omega) $, and  $ \omega^* $ are then calculated by minimizing the following weighted loss function:
\begin{equation}
\omega^* = \arg\min_{\omega} \frac{1}{N} \sum_{i=1}^{N} v_i \cdot L_i^{(\phi)}(\omega).
\end{equation}

To make $ v_i$ more adaptable to the tasks and further improve the semantic coding performance and robustness, our TALSC primarily utilizes an SCM to learn an SEF from $ L_i^{(\phi)}(\omega) $ to $ v_i$, and dynamically applies $ v_i$ to adjust semantic coding networks in each transmission.    

Leveraging the concept of meta-learning, we take the SCM as \emph{meta-learner} with parameters $ \Theta $, while the semantic coding network serves as \emph{learner} with parameters $ \omega =\{\omega_1, \omega_2\} $. 
The \emph{learner} is trained to extract semantic information relevant to the receiver's pragmatic task, while $ \Theta $ adjusts and optimizes $\omega$ and configurations based on feedback from $\phi(\hat{s_i})$. 
The framework intelligently senses and quantifies the degree and type of heterogeneous data bias in $\mathcal{K}$ based on the feedback $\phi(\hat{s_i})$, subsequently amplifying the most relevant $K_i$ to the task and suppressing irrelevant noisy samples.

Furthermore, considering privacy and task-unawareness at the transmitter \cite{task-unaware}, a feedback channel is established to transmit some necessary values in order to obtain the update gradients for $ f_{\omega_1}(\cdot) $, as shown in Fig. \ref{sys}.
In the $t$-th transmission, the weighted loss $\mathcal{L}_t$ is defined by:
\begin{equation}
\mathcal{L}_t = \sum_{i \in M_t}v_i\cdot L_i^{(\phi)}(\omega),
\end{equation}
where $ M_t $ is the index set of the sampled batch of $K_i$ from $\mathcal{K}$.  
Based on $\mathcal{L}_t$ and  $\mathbf{y}$, the data tuple $ (\nabla_{\mathbf{y}}\mathcal{L}_t, \ \mathbf{y}) $ is constructed and feedback to the transmitter. 
Then, the transmitter calculates the update gradients for semantic encoder  $ f_{\omega_1}(\cdot) $ as follows:
\begin{equation} \label{feedback}
\nabla_{\omega_1}\mathcal{L}_t = \nabla_{\mathbf{y}}\mathcal{L}_t \cdot \nabla_{\omega_1}\mathbf{y},
\end{equation}
where $ \nabla_{\omega_1}\mathbf{y} $ can be calculated locally. The $\omega_1$ is then update as follows:
\begin{equation} \label{feedback-update}
\omega_1^{(t+1)} = \omega_1^{(t)} - \alpha \nabla_{\omega_1}\mathcal{L}_t.
\end{equation}
Through this receiver-driven update process, the transmitter remains unaware of the pragmatic use of the transmitted data, ensuring privacy and task-unawareness in task-agnostic image semantic communication system.

\begin{figure*}[t]
	\centering
	\includegraphics[width=0.8\textwidth]{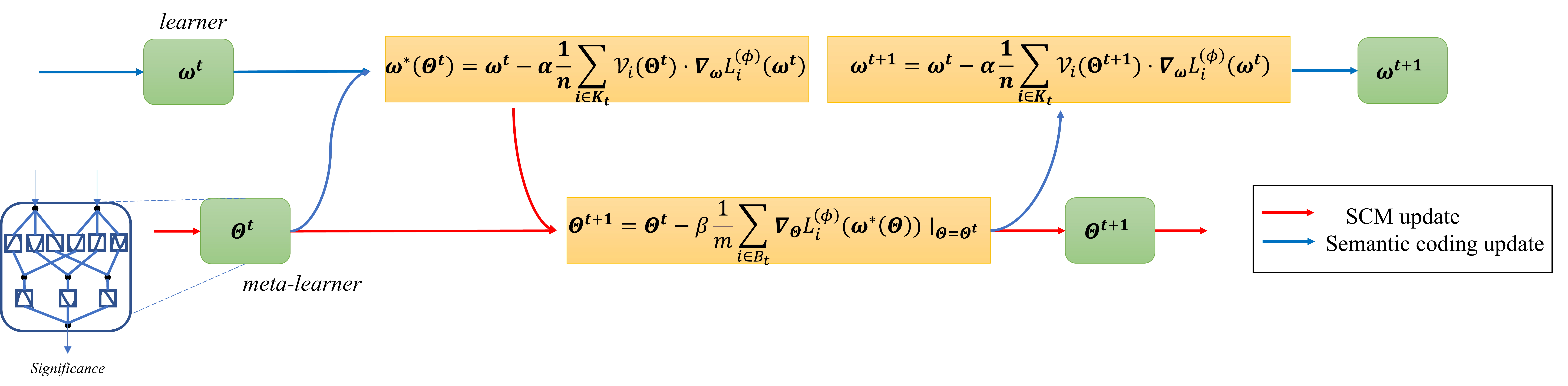}
	\caption{
Illustration of the flowchart for the update of TALSC framework from step $t$ to $t+1$. 
The meta-learner first performs a trial update on the learner to obtain temporary parameters $\omega^{*}(\Theta^{t})$. Based on this intermediate learner, the meta-learner is evaluated and updated according to equation (\ref{SCM-update}), yielding new parameters $\Theta^{t+1}$. Finally, the updated meta-learner is used to refine the learner, resulting in $\omega^{t+1}$.
	}
	\label{SCM-flowchart}
	\vspace{-8pt}
\end{figure*}

\subsection{Evaluation Model and Optimization of SCM} \label{C}

The goal of SCM is to learn an SEF from $\mathcal{K}$ to determine $v_i$ based on feedback from $\phi(\hat{s_i})$, where $v_i$ reflects the contribution of $K_i$ in $\mathcal{K}$ to the tasks of receiver.
Once the SEF is provided by SCM, the semantic coding networks can be further adjusted for robust semantic transmission.
To make the learned SEF more effective in evaluating and selecting $K_i$ in $\mathcal{K}$ and better update $\omega$, we model the evaluation of SCM weighting mapping as a bi-level optimization problem.

Note that we need a small amount of unbiased metadata, denoted as $ \mathcal{B} = \{s_j^{\text{meta}}, z_j^{\text{meta}}\}_{j=1}^M $.
Our objective is to learn an SEF $ \mathcal{V}(\cdot | \Theta) $ by optimizing the SCM parameters $ \Theta $.    

First, based on the initial $ \mathcal{V}(\cdot | \Theta) $, the optimal semantic coding network parameters $ \omega^*(\Theta) $ on $\mathcal{K}$ can be obtained by solving the following optimization problem:
\begin{equation}
\omega^*(\Theta) = \arg\min_{\omega} \mathbb{E}_{{\{s_i, z_i\}} \sim p_{\text{em}}(s, z)} \left(L_i^{(\phi)}(\omega) \cdot \mathcal{V}(L_i^{(\phi)}(\omega)\mid \Theta)\right),
\end{equation}
where $L_i^{(\phi)}(\omega)$ represents the \emph{task loss} for $K_i$, $ p_{\text{em}}(s, z) $ denotes the distribution of $K_i$ in $\mathcal{K}$,  and $\{s_i, z_i\}$ is drawn from $ p_{\text{em}}(s, z) $ to calculate $ L_i^{(\phi)}(\omega) $.
For simplicity, $ \mathcal{V}(L_i^{(\phi)}(\omega)\mid\Theta) $ is abbreviated as $ \mathcal{V}_i(\Theta) $ afterwards.

Then, we define the meta-loss $ \mathcal{L}^{\text{meta}}(\Theta) $ in terms of $ \Theta $ by calculating on the metadata $\mathcal{B}$ as follows:
\begin{equation}
\mathcal{L}^{\text{meta}}(\Theta) = \mathbb{E}_{\{s_i, z_i\} \sim p_{\text{meta}}(s, z)} \left(L_i^{(\phi)}(\omega^*(\Theta))\right),
\end{equation}
where $ p_{\text{meta}}(s, z) $ denotes the distribution of $\mathcal{B}$ and $\{s_i, z_i\}$ is drawn from $p_{\text{meta}}(s, z)$ to calculate $L_i^{(\phi)}(\omega^*(\Theta))$.

To find the optimal parameter $ \Theta^* $ to minimize $ \mathcal{L}^{\text{meta}}(\Theta) $,  a bi-level optimization problem is formulated as follows:
\setcounter{equation}{\theequation+1}
\begin{equation} \label{OP}
\begin{aligned}
\Theta^* 
&= \arg\min_{\Theta} \mathcal{L}^{\text{meta}}(\Theta) \\
&= \arg\min_{\Theta} \mathbb{E}_{\{s_i, z_i\} \sim p_{\text{meta}}(s, z)} \left(L_i^{(\phi)}(\omega^*(\Theta))\right),
\end{aligned}
\tag{\theequation a}
\end{equation}
s.t. \\
\begin{equation} \label{OP-constrain}
\begin{aligned}
\omega^*(\Theta) &= \arg\min_{\omega} \mathbb{E}_{{\{s_i, z_i\}} \sim p_{\text{em}}(s, z)} \left(L_i^{(\phi)}(\omega) \cdot \mathcal{V}_i(\Theta) \right),
\end{aligned}
\tag{\theequation b}
\end{equation}
where the meta-loss $ \mathcal{L}^{\text{meta}}(\Theta) $ also serves as a metric for evaluating the effectiveness of $\mathcal{V}(\cdot | \Theta)$. 
The optimization problem in the constraint (\ref{OP-constrain}) is the low-level optimization process, while the target optimization problem (\ref{OP}) is the high-level optimization process.

\begin{figure*}[b]
	\centering
	\includegraphics[width=0.7\textwidth]{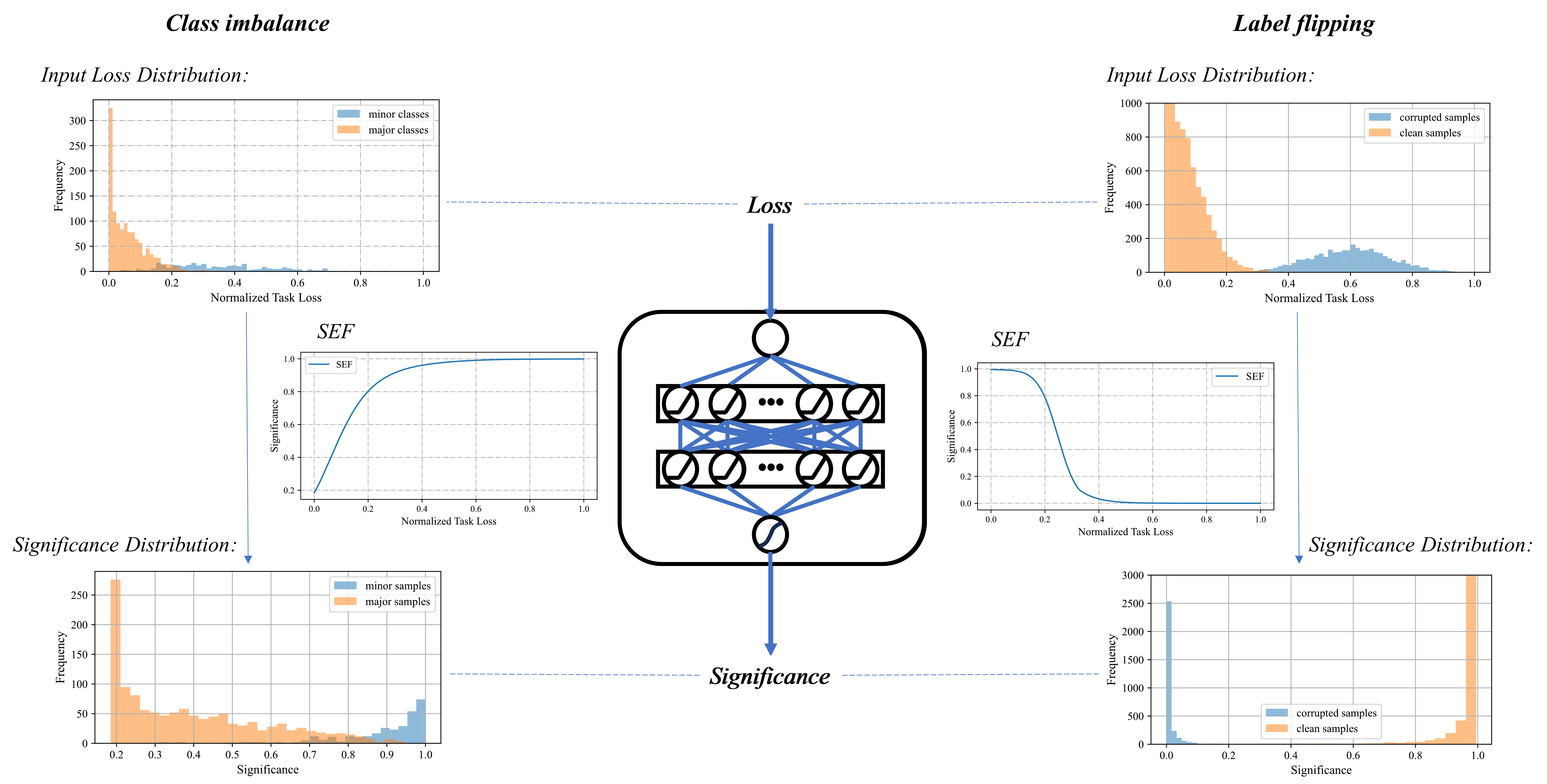}
	\caption{Meta essence understanding for our TALSC framework. 
For label flipping noise, the input loss distribution for $K_i$ demonstrates that corrupted samples typically have higher task loss $L_i^{(\phi)}(\omega)$ but contribute negligibly to the task. 
To address this, the SCM employs a monotonically decreasing SEF, which assigns higher significance values $v_i$ to clean samples with lower $L_i^{(\phi)}(\omega)$, while suppressing the influence of high-loss samples that are likely corrupted.
For class imbalance, the input loss distribution for $K_i$ reveals that the minority class samples often exhibit higher $L_i^{(\phi)}(\omega)$ due to their lack of representation in $\mathcal{K}$. 
To mitigate this issue and enhance the contribution of the minority class, the SCM adjusts SEF to assign greater $v_i$ to these underrepresented samples. 
This reweighting ensures that the minority class receives appropriate attention for semantic coding networks, thereby improving transmission robustness across classes.
}
	\label{meta-essence}
	\vspace{-8pt}
\end{figure*}

\begin{figure*}[t]
	\centering
	\includegraphics[width=0.55\textwidth]{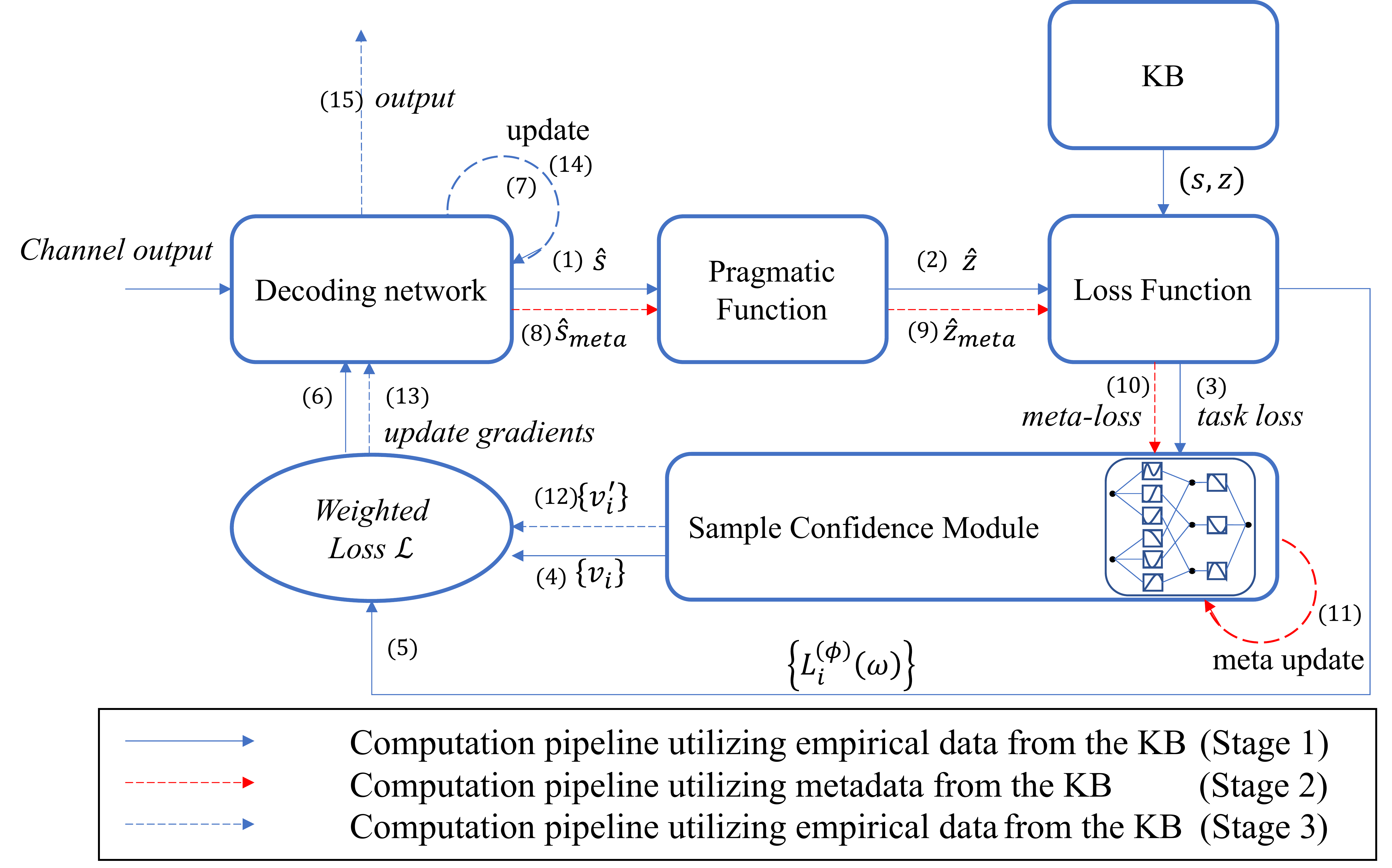}    
	\caption{Update machanism of the semantic decoding part in the task-agnostic image semantic communication system. 
For the $t$-th transmission at the receiver:
(1) Stage 1, Semantic Decoder Update: The semantic decoder is updated to $\omega_2^*(\Theta^{(t)})$ using the sample significance $v_i$ and the sample-based task loss $L_i^{(\phi)}(\omega)$. This process is illustrated by the solid blue lines in the figure.
(2) Stage 2, SCM Optimization: To optimize the SCM, metadata is leveraged to compute the meta-loss $\mathcal{L}^{\text{meta}}(\Theta^{(t)})$ based on the updated semantic decoder $\omega_2^*(\Theta^{(t)})$. The SCM parameters $\Theta^{(t)}$ are then updated to $\Theta^{(t+1)}$ according to $\mathcal{L}^{\text{meta}}(\Theta^{(t)})$. Stage 2 is represented by the dashed red lines in the figure.
(3) Stage 3, update for \emph{learners}: The updated SCM in Stage 2 recalculates the sample significance $v_i^{'}$ for $K_i$. Using the previously computed $L_i^{(\phi)}(\omega)$ from Stage 1, a new weighted loss $\mathcal{L}_t^{'}$ is generated to further refine the semantic decoder,  and the parameter of the semantic decoder is updated to $\omega_2^{(t+1)}$ based on $\mathcal{L}_t^{'}$. The necessary feedback information is then transmitted back to the transmitter for subsequent adjustments.
Stage 3 is represented by the dashed blue lines in the figure.   
Note that the red arrow indicates computations based on metadata, while the blue arrow corresponds to those based on candidate KB samples.    
}
	\label{update-machanism}
	\vspace{-8pt}
\end{figure*}

\subsection{Update Mechanism of Meta-Learning} \label{D}

The bi-level optimization problem modeled in (\ref{OP},\ref{OP-constrain}) is complex to solve. 
Therefore, inspired by model-agnostic machine learning (MAML) \cite{maml}, 
we adopt an online  meta-learning approach to update the parameters of both the \emph{learner} and \emph{meta-learner}  \cite{CMWNet}. The update mechanism of SCM and semantic coding network is shown in Fig. \ref{SCM-flowchart}.

In the $t$-th transmission within the low-level optimization process, the \emph{learner} is temporarily updated based on the \emph{meta-learner} $\mathcal{V}(\cdot \mid \Theta^{(t)})$ and sample-based \emph{task loss} $L_i^{(\phi)}(\omega)$: 
\begin{equation} \label{meta-update}
\omega^{*}(\Theta^{(t)}) = \omega^{(t)} - \alpha \frac{1}{n} \sum_{i \in M_t} \mathcal{V}_i(\Theta^{(t)}) \cdot \nabla_\omega L_i^{(\phi)}(\omega) \mid_{\omega = \omega^{(t)}},
\end{equation}
where 
$ \alpha $ is the step size, and $ n $ is the size of $ M_t $, i.e., $ n = |M_t| $. 
$\mathcal{V}_i(\Theta^{(t)})$ denotes the significance for $K_i$.

Given that updating the semantic encoder necessitates the transmission of essential information through the feedback channel, which may introduce delays due to communication resource constraints, we propose a simplified process.
During the lower-level optimization, we do not update the entire \emph{learner} $\omega=\{\omega_1, \omega_2\}$. Instead, we update only $\omega_2$ as follows:
\begin{equation} \label{meta-update2}
\omega_2^{*}(\Theta^{(t)}) = \omega_2^{(t)} - \alpha \frac{1}{n} \sum_{i \in M_t} \mathcal{V}_i(\Theta^{(t)}) \cdot \nabla_{\omega_2} L_i^{(\phi)}(\omega) \mid_{\omega = \omega^{(t)}},
\end{equation}
where $\omega_2$ is the parameters of semantic decoder, and then we have $\omega^{*}(\Theta^{(t)})=\{ \omega_1^{(t)}, \omega_2^{*}(\Theta^{(t)}) \}$.

Subsequently, in the upper-level optimization process, the \emph{meta-learner}  utilizes the updated \emph{learner} $ \omega^{*}(\Theta^{(t)}) $ to calculate the meta-loss $\mathcal{L}^{\text{meta}}(\Theta^{(t)})$ with respect to a random batch of $\mathcal{B}$, and update as follows:
\begin{equation} \label{SCM-update}
\Theta^{(t+1)} = \Theta^{(t)} - \beta \frac{1}{m} \sum_{i \in B_t} \nabla_\Theta L^{(\phi)}_i(\omega^*(\Theta)) \mid_{\Theta = \Theta^{(t)}},
\end{equation}
where $ B_t $ is the index set of the sampled batch of $\mathcal{B}$, $ \beta $ is the step size, and $ m $ is the size of $ B_t $, i.e., $ m = |B_t| $. 

Finally, the \emph{learner} $ \omega $ is updated by $ \Theta^{(t+1)} $ from a receiver-driven perspective. First, the semantic decoder is updated as follows:
\begin{equation} \label{codec-update}
\omega_2^{(t+1)} = \omega_2^{(t)} - \alpha \frac{1}{n} \sum_{i \in M_t} \mathcal{V}_i(\Theta^{(t+1)}) \cdot \nabla_{\omega_2} L_i^{(\phi)}(\omega) \mid_{\omega = \omega^{(t)}}.
\end{equation}
Then the semantic decoder calculate $\nabla_{\mathbf{y}}\mathcal{L}_t$ as follows:
\begin{equation} \label{weight-loss}
\nabla_{\mathbf{y}}\mathcal{L}_t = \frac{1}{n} \sum_{i \in M_t} \mathcal{V}_i(\Theta^{(t+1)}) \cdot \nabla_{\mathbf{y}} L_i^{(\phi)}(\omega) \mid_{\omega = \omega^{(t)}}.
\end{equation}
At last, the feedback $(\nabla_{\mathbf{y}}\mathcal{L}_t, \ \mathbf{y})$ is sent to the transmitter to update $\omega_1$ as shown in Eq. (\ref{feedback-update}).

To further investigate the update of SCM, combining Eq. (\ref{meta-update2}) and Eq. (\ref{SCM-update}), we can derive the following result using the chain rule \cite{CMWNet}: 
\begin{equation} \label{update-eq}
\Theta^{(t+1)} = \Theta^{(t)} + \frac{\alpha \beta}{n} \sum_{j \in M_t} \left(\frac{1}{m} \sum_{i \in B_t} G_{ij}\right) \frac{\partial \mathcal{V}(L_j^{(\phi)}(\omega); \Theta)}{\partial \Theta} \mid_{\Theta^{(t)}},
\end{equation}
where $ G_{ij} = {\frac{\partial L_i^{(\phi)}(\omega)}{\partial \omega}}^T \mid_{\omega^*(\Theta^{(t)})} \cdot \frac{\partial L_j^{(\phi)}(\omega)}{\partial \omega} \mid_{\omega^{(t)}} $. 
Each term in the summation indicates the gradient ascent direction for the \emph{meta-learner} $ \mathcal{V}(\cdot; \Theta) $. The coefficient $ \frac{1}{m} \sum_{i=1}^m G_{ij} $ represents the weight applied to the $ j $-th gradient term, quantifying the similarity between the gradient of a candidate KB sample and the average gradient direction derived from a mini-batch of metadata.     
A higher similarity, i.e., larger $ \frac{1}{m} \sum_{i=1}^m G_{ij} $, indicates stronger alignment with metadata, thereby resulting in a greater contribution to the increase of sample significance.
This aligns with the intuition that samples exhibiting gradient consistency with metadata are more informative for guiding \emph{learners}’ update.
The meta essence understanding for our TALSC framework is shown in Fig. \ref{meta-essence}.

\begin{algorithm}[t]
\caption{The update mechanism of SCM and semantic coding networks.}
\label{alg-A}

\begin{algorithmic}[1]
\Require  $\mathcal{K}$,  $\mathcal{B}$.
\Ensure  $\omega_1, \omega_2$, and $\Theta$.
\Statex
\hrule
\For{each transmission}

\State Take a batch of samples in $\mathcal{K}$ and $\mathcal{B}$, the emperical samples is $S$;
\State The transmitter encode and send all $\mathbf{x} = f_{\omega_1}(S)$;
\State The receiver obtains the channel output $\mathbf{y}$ and reconstructs the semantic information $\widehat{\mathbf{x}}$ according to the CSI $h$, as in Eq. (\ref{x-recover}).
\State The reciever decode data $\widehat{S} = g_{\omega_2}(\widehat{\mathbf{x}})$ and calculate the task output $\widehat{Z} = \phi(\widehat{S})$;
\State Pragmatic function calculates the sample-based \emph{task loss} $L_i^{(\phi)}(\omega)$ for the decode data, and the SCM evaluates the sample significance according to the parameters $\Theta^{(t)}$.
\State The semantic codec is temporarily updated according to Eq. (\ref{meta-update2}) to obtain the updated \emph{learner} $\omega^{*}(\Theta^{(t)})$.
\State Pragmatic function calculates the \emph{meta-loss} $\mathcal{L}^{\text{meta}}(\Theta)$ based on the meta data for $i \in B_t$, and the SCM updates $\Theta^{(t+1)}$ according to Eq. (\ref{SCM-update}).
\State The updated SCM recalculates the sample significance based on  $\Theta^{(t+1)}$ and the semantic codec updates ${\omega}^{(t+1)}$ by Eq. (\ref{codec-update}) and Eq. (\ref{feedback-update}).

\EndFor

\end{algorithmic}

\end{algorithm}

The system view of the update process is shown in Fig. \ref{update-machanism}. 
By iteratively implementing this process, the \emph{meta-learner} can automatically learn the $\mathcal{V}(\cdot | \Theta)$ guided by a small amount of metadata $\mathcal{B}$, and appropriately weight KB samples to update \emph{learner} while gradually refine $\mathcal{V}(\cdot | \Theta)$. 
Thus, even when there is a bias distribution in $\mathcal{K}$, 
the semantic coding networks can still effectively transmit semantic information for robust task execution.
The update mechanism of TALSC framework is summarized in Algorithm. \ref{alg-A}.

\section{Design of Sample Confidence Module Based on MLP and KAN} \label{SCM-KAN}

During the training of the semantic coding network, hyperparameters must be adjusted to control the significance evaluation process for better semantic coding performance. 
Given the complexity of optimizing the SEF in a bi-level optimization problem in (\ref{OP}), the MLP and KAN networks can serve as a surrogate model to approximate the behavior of $\mathcal{V}(\cdot | \Theta)$, thereby accelerating the optimization process.
In this section, we propose a construction of the SCM module based on the underlying structures of MLP and KAN.
Furthermore, leveraging the spline structure within KAN, we present the GE approach for KAN-based SCM to achieve higher resolution, enabling the capture of fine-grained details for improved significance evaluation and reducing the need for extensive model training. 
Additionally, we analyze the approximation performance and the grid dependency of the KAN-based SCM. 

\subsection{MLP and KAN Representations of SEF}

The choice of structure for SCM must consider factors such as the ability to approximate general nonlinear functions and the parameter utilization efficiency.
As a universal approximator, the MLP can theoretically approximate any continuous function with arbitrary precision, given a sufficient number of hidden neurons, as established by the universal approximation theorem \cite{UAT}. 
This implies that we can employ an MLP structure to express the $\mathcal{V}(\cdot | \Theta)$ from the pragmatic task loss to sample significance.   
By utilizing an MLP-based SCM module, the nonlinear $\mathcal{V}(\cdot | \Theta)$ can be effectively expressed, with the advantages of a relatively simple basic structure and training algorithm, making it easy to implement.

To enable the SCM to evaluate the non-linear significance effectively, each hidden node employs a ReLU activation function $\sigma$, while the output layer uses a sigmoid activation $S(x)=\frac{1}{1 + e^{-x}}$ to constrain the significance values within the range $[0,1]$. 
The SEF $\mathcal{V}(x | \Theta)$ is expressed as:
\begin{equation}
\mathcal{V}(x | \Theta) = S\left((W_{L-1} \circ \sigma \circ W_{L-2} \circ \sigma \circ \cdots \circ W_1 \circ \sigma \circ W_0)x \right),
\end{equation}
where the symbol $\circ$ denotes the composition of transformations, indicating the sequential application of $W_k$ and $\sigma$ within the SCM, and $W_k (0\leq k \leq L-1)$ denotes affine transformations.     

While the universal approximation theorem asserts that an MLP with a sufficient number of hidden neurons can approximate any continuous function to arbitrary precision, determining the exact number of neurons required for a specific approximation level remains a significant challenge in practice \cite{UAT}.   
Furthermore, the authors in \cite{mlpReLU} demonstrated that the depth of the ReLU MLP needs to reach at least $\log n_0$ to achieve the desired convergence rate for the approximation error, where $n_0$ is the number of samples. 
Moreover, while MLP in \cite{mlpReLU} focus on regression tasks, the bi-level optimization problem in (\ref{OP}) for significance evaluation introduces additional complexity, posing new challenges for constructing the $\mathcal{V}(\cdot | \Theta)$ using MLPs. 
Although increasing the width and depth of the MLP-based SCM can improve significance evaluation performance in the bi-level optimization problem (\ref{OP}), as suggested by neural scaling laws, adapting to varying task demands and precision requirements often necessitates retraining, which is both computationally expensive and resource-intensive.
Therefore, the necessity for a large number of parameters to effectively model $\mathcal{V}(x | \Theta)$ in MLP results in elevated computational costs, and increased storage demands.

In contrast, KAN-based SCM may employ a smaller architecture, as they learn general nonlinear activation functions to construct the SEF. 
KAN can construct complex functional relationships through a combination of fully connected layers and one-dimensional splines, providing an efficient and powerful approach for the SEF optimization process \cite{KAN-survey}.     

Inspired by the implement of the KAN network in \cite{KAN}, KAN representations of the SCM can be described as follows.

First, the shape of SCM network can be represented by a set of integers $[n_0 = d, n_1, n_2, \ldots, n_l, \ldots, n_L = 1]$,
where $ n_l $ denotes the number of nodes in the $ l $-th layer.

Second, we denote the value of the $ i $-th node at the $ l $-th layer as $ x_{l,i} $. Between the $ l $-th layer and the $ (l+1) $-th layer, there are $ n_l n_{l+1} $ activation functions, where the activation function from $ x_{l,j} $ to $ x_{l+1,i} $ is represented as $ \phi_{l,i,j} $. Each activation function contains trainable parameters. Thus, the value of $ x_{l+1,i} $ at the $ (l+1) $-th layer can be expressed as $(i = 1, 2, ..., n_{l+1})$:
\begin{equation}
x_{l+1,i} = \sum_{j=1}^{n_l} \phi_{l,i,j}(x_{l,j}),
\end{equation}
where $ \phi_{l,i,j}(\cdot) $ is parameterized by the B-spline function, and we can express the computations at each layer in the matrix form as follows:    
\begin{equation}
\bold{x_{l+1}} =
\begin{pmatrix}
\phi_{l,1,1}(\cdot) & \phi_{l,1,2}(\cdot) & \cdots & \phi_{l,1,n_l}(\cdot) \\
\phi_{l,2,1}(\cdot) & \phi_{l,2,2}(\cdot) & \cdots & \phi_{l,2,n_l}(\cdot) \\
\vdots & \vdots & \ddots & \vdots \\
\phi_{l,n_{l+1},1}(\cdot) & \phi_{l,n_{l+1},2}(\cdot) & \cdots & \phi_{l,n_{l+1},n_l}(\cdot)
\end{pmatrix}
\bold{x_l},
\end{equation}
where $\bold{x_l} = [x_{l,1}, x_{l,2}, ..., x_{l,n_l}]^T.$
This can be compactly written as:
\begin{equation}
\bold{x_{l+1}} = \Phi_l \circ \bold{x_l},
\end{equation}
where $ \Phi_l = \{\phi_{l,i,j}\}^{n_{l+1} \times n_l} $ is defined as the activation function matrix from the $ l $-th layer to the $ (l+1) $-th layer, with the dimensions $ n_{l+1} \times n_l $. 

Then, the SCM network can be represented as follows:
\begin{equation}
\mathcal{V}(x | \Theta) = S\left( (\Phi_{L-1} \circ \Phi_{L-2} \circ \cdots \circ \Phi_1 \circ \Phi_0)x \right).
\end{equation}

The authors in \cite{KAN} have shown that KAN effectively mitigates the risk of overfitting and exhibits a better neural scaling rate. 
In our TALSC framework, the SCM incorporates the KAN structures, which leverages spline-based activation functions that allows for finer granularity and greater flexibility in capturing complex relationships for significance scores.
Therefore, the KAN-based SCM provides a more efficient and accurate representation of the sample significance levels than the MLP-based SCM, which is crucial for optimizing the meta-loss.

\subsection{Grid Dependency and Approximation Theorem of KAN-based SCM}

In KANs, basis functions can be classified based on their dependency on the grid. 
Static basis functions, such as Jacobi polynomials \cite{Jacobi1, Jacobi2}, are independent of the grid, which restrict their adaptability to changes in grid resolution or training conditions. 
As highlighted by the authors in \cite{grid-depend}, maintaining grid dependency for KAN basis functions is crucial, as it enables adaptive training and allows the model's representations to dynamically adjust in response to grid updates during training.
For KAN-based SCM, the grid dependency and approximation error bounds are considered as follows. 
We assume that the target SEF $\mathcal{V}(x) \in [0,1]$ for the optimization problem (\ref{OP}) can be represented by the Kolmogorov-Arnold theorem as follows: 
\begin{equation}  \label{vform}
\mathcal{V}(x)=S\left(\left(\Phi_{L - 1}\circ\Phi_{L - 2}\circ\cdots\circ\Phi_{1}\circ\Phi_{0}\right)x\right),
\end{equation}
where $S(x)$ represents the sigmoid function and each element $\phi_{l,i,j}$ in $\Phi_l$ ($0\leq l\leq L - 1$) is $(k+1)$ times continuously differentiable. 
Due to the spline structure in our KAN-based SCM, it is dependent on a finite grid point set. 
To express this dependency, we denote the nonlinear transformation from the $ l $-th layer to the $(l+1)$-th layer as $ \Phi_l^G $, which is approximated by using a spline with a grid size $ G $.
Thus, we have the approximation error bound for the KAN-based SCM in Theorem \ref{theorem1}.
\begin{theorem} \label{theorem1}
Suppose the  SEF $\mathcal{V}(x)\in [0,1]$ of an SCM can be expressed as shown in Eq. (\ref{vform}),     
Then, there exists a meta-learner constructed by the $k$-th order B-spline function $\Phi_l^G$ ($0\leq l\leq L - 1$):
\begin{equation} \label{meta-representation}
\mathcal{V}(x; \Theta) = S\left(\left(\Phi_{L - 1}^{G}\circ\Phi_{L - 2}^{G}\circ\cdots\circ\Phi_{0}^{G}\right)x\right). 
\end{equation}
Then, there exists a constant $C$, which depends on $\mathcal{V}(x)$, 
such that for any $0\leq m\leq k$, we have the approximation bound in terms of  $G$:
\begin{equation} \label{approximate-bound}
\|\mathcal{V}(x)-\mathcal{V}(x; \Theta)\|_{C^m}\leq CG^{-k - 1 + m},
\end{equation}
where $C^m$ measures the magnitude of derivatives up to order $m$ as follows:
\[\|g\|_{C^m}=\max_{|\beta|\leq m}\sup_{x\in [0,1]^n}|D^{\beta}g(x)|.\]
\end{theorem}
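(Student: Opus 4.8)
The plan is to reduce the multi-layer statement to a single B-spline layer and then invoke the classical spline approximation estimate, propagating the error through the composition. First I would recall the standard result from spline theory (e.g., de Boor): if a function $\phi \in C^{k+1}([0,1])$ and $\phi^G$ is its $k$-th order B-spline quasi-interpolant on a grid of size $G$, then $\|\phi - \phi^G\|_{C^m} \leq c\, G^{-k-1+m}$ for $0 \leq m \leq k$, with $c$ depending only on $\|\phi^{(k+1)}\|_\infty$. Since each scalar activation $\phi_{l,i,j}$ is $(k+1)$-times continuously differentiable on the compact domain $[0,1]$ by hypothesis, I would apply this to every entry of every $\Phi_l$, obtaining $\|\Phi_l - \Phi_l^G\|_{C^m} \leq c_l\, G^{-k-1+m}$ entrywise, where the implied constants depend on the $(k+1)$-st derivatives of the activations — which in turn depend only on $\mathcal{V}(x)$.

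Next I would set up the telescoping/hybrid-composition argument to control the error of the whole network. Writing $\Psi = \Phi_{L-1} \circ \cdots \circ \Phi_0$ and $\Psi^G = \Phi_{L-1}^G \circ \cdots \circ \Phi_0^G$, I would insert intermediate terms $\Phi_{L-1}^G \circ \cdots \circ \Phi_{l+1}^G \circ \Phi_l \circ \Phi_{l-1} \circ \cdots \circ \Phi_0$ and bound $\|\Psi - \Psi^G\|_{C^m}$ by a sum of $L$ terms, each of the form (product of Lipschitz / $C^m$ norms of the outer spline maps) $\times$ $\|\Phi_l - \Phi_l^G\|_{C^m}$ $\times$ (a $C^m$-norm factor from the inner exact maps). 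Here I would use that the composition is stable in $C^m$: the $C^m$-norm of a composition is bounded by a polynomial in the $C^m$-norms of the factors (Faà di Bruno / chain rule), and that the spline approximants $\Phi_l^G$ have $C^m$-norms bounded uniformly in $G$ (since they converge to $\Phi_l$). Each of the $L$ terms therefore carries exactly one factor $G^{-k-1+m}$, so the total is $\leq C' G^{-k-1+m}$ with $C'$ depending on $\mathcal{V}$ and $L$ (and $L$ is fixed by the representation). Finally, since the sigmoid $S$ is smooth with bounded derivatives of all orders on $\mathbb{R}$, composing with $S$ only multiplies the constant by a fixed factor (again via the chain rule in $C^m$), giving $\|\mathcal{V}(x) - \mathcal{V}(x;\Theta)\|_{C^m} \leq C\, G^{-k-1+m}$, which is the claimed bound (\ref{approximate-bound}).

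The main obstacle I expect is the careful bookkeeping in the $C^m$-norm of the composition: one must verify that the intermediate maps (partly exact $\Phi_l$, partly spline $\Phi_l^G$) have $C^m$-norms bounded by constants independent of $G$, and that the higher-order chain-rule expansion does not introduce extra powers of $G^{-1}$ that would ruin the rate. This requires a uniform bound on $\|\Phi_l^G\|_{C^m}$, which follows because $\|\Phi_l^G\|_{C^m} \leq \|\Phi_l\|_{C^m} + \|\Phi_l - \Phi_l^G\|_{C^m} \leq \|\Phi_l\|_{C^m} + c_l G^{-k-1+m} \leq \|\Phi_l\|_{C^m} + c_l$ for $G \geq 1$. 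A secondary subtlety is ensuring the domains match — each intermediate layer output must remain in the compact set on which the next layer's activations are $(k+1)$-times differentiable and on which the spline approximation estimate holds; I would handle this by noting the exact maps have compact range and, for $G$ large enough, the spline maps' ranges stay in a fixed neighborhood, so the constant $C$ can be taken uniform. Once these uniformity facts are in place, the estimate (\ref{approximate-bound}) follows by combining the per-layer spline bound with the composition and sigmoid-composition bounds, absorbing all layer-count and derivative-magnitude dependence into the single constant $C = C(\mathcal{V})$.
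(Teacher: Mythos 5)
Your proposal follows essentially the same route as the paper: the classical per-layer B-spline estimate, a telescoping decomposition into hybrid compositions with exactly one layer replaced by its spline approximant in each residual term, the triangle inequality, and finally absorbing the outer sigmoid via its bounded derivatives. If anything, your treatment is slightly more careful than the paper's, which handles the sigmoid only through its Lipschitz constant $\tfrac{1}{4}$ and asserts the residual bounds directly, whereas you make explicit the Fa\`a di Bruno / chain-rule bookkeeping and the $G$-uniform $C^m$ bounds on the spline layers needed to justify those steps for $m\geq 1$.
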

\begin{proof}
For $S(x)=\frac{1}{1 + e^{-x}}$, we can directly have $S'(x)\leq\frac{1}{4}$. 
Since the derivative of $S(x)$ is bounded, it can be concluded that $S(x)=\frac{1}{1 + e^{-x}}$ is a Lipschitz continuous function. 
Thus, we have: 
\begin{equation} \label{sigmoid-scale}
\vert S(x_1)-S(x_2)\vert\leq\frac{1}{4}\vert x_1 - x_2\vert.
\end{equation}

On the other hand, according to the classical one-dimensional B-spline theory \cite{bspline}, and the fact that $\phi_{l,i,j}$ is a uniformly bounded continuous function, there exists a finite grid B-spline function $\phi_{l,i,j}^G$ for $0\leq m\leq k$ we have,
\begin{equation}
\begin{aligned}
&\|\left(\phi_{l,i,j}\circ\Phi_{l - 1}\circ\Phi_{l - 2}\circ\cdots\circ\Phi_{1}\circ\Phi_{0}\right)x\\
&-\left(\phi_{l,i,j}^{G}\circ\Phi_{l - 1}\circ\Phi_{l - 2}\circ\cdots\circ\Phi_{1}\circ\Phi_{0}\right)x\|_{C^m}\\
&{\leq} CG^{-k - 1 + m},
\end{aligned}
\end{equation}
where $C$ is independent of $G$.

Therefore, we define the residual $R_l$ as
\begin{align*}
&R_l:=\left(\Phi_{L - 1}^{G}\circ\cdots\circ\Phi_{l + 1}^{G}\circ\Phi_{l}\circ\Phi_{l - 1}\circ\Phi_{l - 2}\circ\cdots\circ\Phi_{1}\circ\Phi_{0}\right)x\\
&-\left(\Phi_{L - 1}^{G}\circ\cdots\circ\Phi_{l + 1}^{G}\circ\Phi_{l}^{G}\circ\Phi_{l - 1}\circ\Phi_{l - 2}\circ\cdots\circ\Phi_{1}\circ\Phi_{0}\right)x,
\end{align*}
and $R_l$ satisfies: 
\begin{equation} \label{Rl-scale}
\|R_l\|_{C^m}\leq CG^{-k - 1 + m}.
\end{equation}

Thus, there exists a constant $C$ which depends on $\mathcal{V}(x)$ and its representation $\mathcal{V}(x; \Theta)$, we have 
\begin{align*}
&\| \mathcal{V}(x)-\mathcal{V}(x; \Theta) \|_{C^m} \\ 
&\overset{(a)}{\leq} \frac{1}{4} \|\left(\Phi_{L - 1} \circ \Phi_{L - 2} \circ \cdots \circ \Phi_{0}\right)x  \\ &- \left(\Phi_{L - 1}^{G} \circ \Phi_{L - 2}^{G} \circ \cdots \circ \Phi_{0}^{G}\right)x \|_{C^m}  \\ 
&\overset{(b)}{\leq} \frac{1}{4} \| R_{L - 1} + R_{L - 2} + \cdots + R_{0} \|_{C^m} \\ 
&\overset{(c)}{\leq} \frac{1}{4} \left( \| R_{L - 1} \| + \| R_{L - 2} \| + \cdots + \| R_{0} \| \right)  \\ 
&\overset{(d)}{\leq} C G^{-k - 1 + m} ,
\end{align*}
where $(a)$ follows from Eq. (\ref{sigmoid-scale}); $(b)$ is due to the definition of $ R_l $; $(c)$ uses the triangle inequality; $(d)$ follows from Eq. (\ref{Rl-scale}), which establishes a bound on the residuals.

\end{proof}

\subsection{Grid Extension Approach for KAN-based SCM}

The KAN-based SCM leverages the flexibility of spline functions, enabling adaptable approximation of $\mathcal{V}(\cdot | \Theta)$ through adjustable grid resolutions. 
To derive a fine-grained $\mathcal{V}(\cdot | \Theta)$ for significance evaluation, the SCM is initially trained with a compact parameter set and subsequently expanded into a higher-resolution model by refining its grid, eliminating the need for complete retraining.

Specifically, in the KAN-based SCM, the parameters $\Theta$ are composed of the B-spline parameters that construct the activation functions. 
These parameters can be grouped according to their corresponding activation functions, such that $\Theta=\{\theta_1, \theta_2, …, \theta_q, … \theta_{N_0} \}$, where $N_0$ is the total number of activation functions, and $\theta_q$ is the parameters for the $q$-th activation function in SCM. 
For a general KAN structure with a shape defined as $[n_0, n_1, n_2, \ldots, n_l, \ldots, n_L]$ and $N_0=\sum_{i=0}^{L-1}n_ln_{l+1}$. 
Given the grid size $G$ and spline order $k$, each activation function $f(\cdot; \theta_q) (q=1, \ldots, N_0)$ is defined over the interval $[a, b]$, which is uniformly divided into $G$ subintervals, denoted as $\{ t_0 = a, t_1, t_2, \ldots, t_G = b \}$. 
To construct a spline of order $k$ for representing the activation function, the knot vector is expanded to $\{ t_{-k}, \ldots, t_{-1}, t_0, \ldots, t_G, t_{G+1}, \ldots, t_{G+k} \}$. 
This setup results in $(G + k)$ B-spline basis functions, and the $i$-th  B-spline basis function $B_{i}(x)$ is nonzero only within the interval $[t_{-k+i-1}, t_{i}]$, where $i = 1, \ldots, G+k$. 
Thus, the activate function $f(\cdot; \theta_q)$ can be expressed as follows:
\begin{equation}
f(x; \theta_q) = \sum_{i = 1}^{G + k} c_{qi} B_{i}(x), 
\end{equation}
where $c_{qi}$ is the corresponding coefficient for $B_{i}(x)$ in the $q$-th activate function within SCM, and $\theta_q = \{c_{qi}\}_{i=1}^{G+k}$.

Given the initial SCM with the grid size $G_1$ and spline order $k_1$, we have the SCM parameter $\Theta^{(1)}=\{\theta_1^{(1)}, \theta_2^{(1)}, …, \theta_q^{(1)}, … \theta_{N_0}^{(1)} \}$, 
and the $q$-th activate function can be expressed as follows:
\begin{equation}
f^{(1)}(x; \theta_q^{(1)}) = \sum_{i = 1}^{G_1 + k_1} c_{qi}^{(1)} B_{i}^{(1)} (x), 
\end{equation}
where $B_{i}^{(1)}(x)$ denotes the B-spline basis functions for grid size $G_1$ and spline order $k_1$, $c_{qi}^{(1)}$ is the corresponding coefficient for $B_{i}^{(1)}(x)$, and $\theta_q^{(1)} = \{c_{qi}^{(1)}\}_{i=1}^{G_1+k_1}$. 

Given a finer grid with $G_2$ and spline order $k_2$, we have the SCM parameter $\Theta^{(2)}=\{\theta_1^{(2)}, \theta_2^{(2)}, …, \theta_q^{(2)}, … \theta_{N_0}^{(2)} \}$, 
and the $q$-th activate function can be expressed as follows:
\begin{equation}
f^{(2)}(x; \theta_q^{(2)}) = \sum_{i = 1}^{G_2 + k_2} c_{qi}^{(2)} B_{i}^{(2)} (x), 
\end{equation}
where $ \theta_q^{(2)}=\{c_{qi}^{(2)}\}_{i=1}^{G_2+k_2} $ can be initialized by minimizing the MSE between $ f^{(1)}(x; \theta_q^{(1)}) $ and $ f^{(2)}(x; \theta_q^{(2)}) $ under the distribution $p(x)$ for $x$, and we formulate this minimization problem as follows:
\begin{equation} \label{GE-OP}
\begin{split}
\theta_q^{(2)*} = 
\underset{\theta_q^{(2)}}{\text{argmin}} \ 
\mathbb{E}_{x \sim p(x)} \bigg( 
& \sum_{j = 1}^{G_2 + k_2} c_{qj}^{(2)} B_j^{(2)}(x) \\
& - \sum_{i = 1}^{G_1 + k_1} c_{qi}^{(1)} B_i^{(1)}(x)
\bigg)^{2}.
\end{split}
\end{equation}
This can be solved using least squares methods.      

First, we sample data points $x_i$ from $p(x)$ and obtain $ \{ x_i \}_{i=1}^n $.
Then, the basis spline matrix for grid $ G_1 $ and spline order $ k_1 $ is defined as:
\begin{equation} \label{basis-spline-matrix1}
\mathcal{B}_1 = \begin{bmatrix}
B_1^{(1)}(x_1) & B_2^{(1)}(x_1) & \cdots & B_m^{(1)}(x_1) \\
B_1^{(1)}(x_2) & B_2^{(1)}(x_2) & \cdots & B_m^{(1)}(x_2) \\
\vdots & \vdots & \ddots & \vdots \\
B_1^{(1)}(x_n) & B_2^{(1)}(x_n) & \cdots & B_m^{(1)}(x_n)
\end{bmatrix},
\end{equation}
where $ m = G_1 + k_1 $. 
Similarly, the basis spline matrix for grid $ G_2 $ and spline order $ k_2 $ is defined as:
\begin{equation} \label{basis-spline-matrix2}
\mathcal{B}_2 = \begin{bmatrix}
B_1^{(2)}(x_1) & B_2^{(2)}(x_1) & \cdots & B_{m'}^{(2)}(x_1) \\
B_1^{(2)}(x_2) & B_2^{(2)}(x_2) & \cdots & B_{m'}^{(2)}(x_2) \\
\vdots & \vdots & \ddots & \vdots \\
B_1^{(2)}(x_n) & B_2^{(2)}(x_n) & \cdots & B_{m'}^{(2)}(x_n)
\end{bmatrix},
\end{equation}
where $ m' = G_2 + k_2 $. We can then express the minimization problem (\ref{GE-OP}) in the matrix form as follows:    
\begin{equation} \label{e34}
\theta_q^{(2)*} = \underset{\theta_q^{(2)}}{\text{argmin}} \ (\mathcal{B}_1 \theta_q^{(1)} - \mathcal{B}_2 \theta_q^{(2)})^T (\mathcal{B}_1 \theta_q^{(1)} - \mathcal{B}_2 \theta_q^{(2)}).
\end{equation}
Eq. (\ref{e34}) 
leads to the analytical solution for the optimal initialization $ \theta_q^{(2)*} $:
\begin{equation} \label{ge}
\theta_q^{(2)*} = (\mathcal{B}_2^T \mathcal{B}_2)^{-1} \mathcal{B}_2^T \mathcal{B}_1 \theta_q^{(1)} = \mathcal{B}_2^+ \mathcal{B}_1 \theta_q^{(1)},
\end{equation}
where $ \mathcal{B}_2^+ $ denotes the pseudoinverse of $ \mathcal{B}_2 $. We define the transformation matrix $ \mathbf{T}_{12} = \mathcal{B}_2^+ \mathcal{B}_1 $ as the grid transformation matrix from $ (G_1, k_1) $ to $ (G_2, k_2) $.

Utilizing the transformation matrix $ \mathbf{T}_{12} $ enables the independent expansion of all splines in the SCM, resulting in finer granularity for SEFs. 
By applying the transformation $\theta_q^{(2)}=\mathbf{T}_{12}\theta_q^{(1)}$ for $1\leq q \leq N_0$, we can derive the new set of SCM parameters $\Theta^{(2)}$ from the original parameters $\Theta^{(1)}$. 
After this, minimal fine-tuning is performed to ensure that the refined parameters $\Theta^{(2)}$ are optimized for the new grid resolution.
The GE approach for KAN-based SCM is summerized in Algorithm \ref{GE}.
Indeed, increasing  $G$ is typically employed to refine the SCM rather than indefinitely increasing $k$. 
This is because excessively large $k$ may introduce excessive oscillations, leading to optimization challenges and potential instability in the learning process.     

\begin{algorithm}[t]
\caption{SCM Grid Extension Approach.}
\label{GE}
\begin{algorithmic}[1]
\Require $ \Theta^{(1)} $, $ G_1 $, target SEF approximation tolerance $ \epsilon $.
\Ensure  $ \Theta^{(2)} $.
\Statex
\hrule
\State Infer the target SCM grid size $ G_2 $ based on the approximate bound in Eq.~(\ref{approximate-bound}) and the tolerance $ \epsilon $. If $ G_2 \leq G_1 $, terminate; otherwise, proceed to the next step.
\State Construct the basis spline matrices $ \mathcal{B}_1 $ and $ \mathcal{B}_2 $ by Eq.(\ref{basis-spline-matrix1}) and Eq. (\ref{basis-spline-matrix2}) for grid sizes $ G_1 $ and $ G_2 $, respectively.
\State Compute the grid transformation matrix $ \mathbf{T}_{12} $ as defined in Eq.~(\ref{ge}).
\For{$q=1:N_0$}
\State $\theta_q^{(2)}=\mathbf{T}_{12}\theta_q^{(1)}$.
\EndFor 
\State $\Theta^{(2)}=\{\theta_1^{(2)}, \theta_2^{(2)}, …, \theta_{N_0}^{(2)} \}$
\State Fine-tune $ \Theta^{(2)} $ to ensure the SEF fits the finer grid appropriately and obtain the final fine grid SCM parameters $ \tilde{\Theta}^{(2)} $.
\end{algorithmic}
\end{algorithm}

\section{Simulation Results and Discussions} \label{sim}

\subsection{Simulation Setup}

In this section, we compare the proposed TALSC framework with the task-agnostic deep joint source-channel coding (TA-DeepJSCC) approaches under AWGN, Rayleigh fading, and Rician fading channels \cite{task-unaware}.
The simulated dataset is CIFAR-10, comprising 60,000 images with dimensions of $32\times32$ pixels, categorized into 10 classes.
Specifically, 50,000 images are allocated for training, and the remaining 10,000 are designated for testing.
The pragmatic task at the receiver is image classification. 
Therefore, we use the accuracy of classification and recognition at the receiver as the accuracy of our semantic transmission recovery, i.e. semantic recovery accuracy (SRA).
In addition, we use MS-SSIM \cite{mssim} to evaluate the image quality of the reconstructed images.
The semantic codec utilizes a CNN architecture. It consists of an encoder with four layers of $6\times6$ convolutional layers, each followed by ReLU activations and specific pooling operations. The decoder also comprises four layers, employs transposed convolutional layers to reconstruct the output based on the received signal $\mathbf{y}$. The decoder aims to incrementally restore the spatial resolution of the feature maps until they closely match the original image dimensions.

\subsection{Results}

\subsubsection{KB with label flipping noise}

Semantic label flipping noise has a serious impact on the coding process and can mislead semantic coding networks to produce incorrect results, thereby leading to task failure \cite{label-noise, label-noise2}.
To quantify the degree of label flipping noise, we define the Flipping Noise Rate (FNR), which represents the probability that labels in the KB are flipped to another category due to the influence of semantic noise.
We can construct a flipping pattern matrix with FNR = $p$ as follows: 
\[
\begin{bmatrix}
(1-p) & p/(n - 1) &\cdots & p/(n - 1)\\
p/(n - 1) & (1-p) &\cdots & p/(n - 1)\\
\vdots &\vdots &\ddots &\vdots\\
p/(n - 1) &\cdots & p/(n - 1) & (1-p)
\end{bmatrix}, 
\]
where $n$ denotes the number of label categories. In this $n$-by-$n$ matrix, the diagonal elements are $(1 - p)$, and each off-diagonal element is $p/(n - 1)$. Consequently, the sum of the elements in each row is equal to 1.     
For the $i$-th category, sampling is performed according to the data distribution of the $i$-th row in the flipping pattern matrix. 
Specifically, there is a probability of $(1 - p)$ that the label remains unchanged, while with probability $p$, it randomly flips to any of the incorrect categories. 
This flipping probability $p$ is the FNR defined in this paper. 
A higher FNR indicates a greater proportion of corrupted samples in KB.

\begin{figure}[t]
	\centering
	\includegraphics[width=0.5\textwidth]{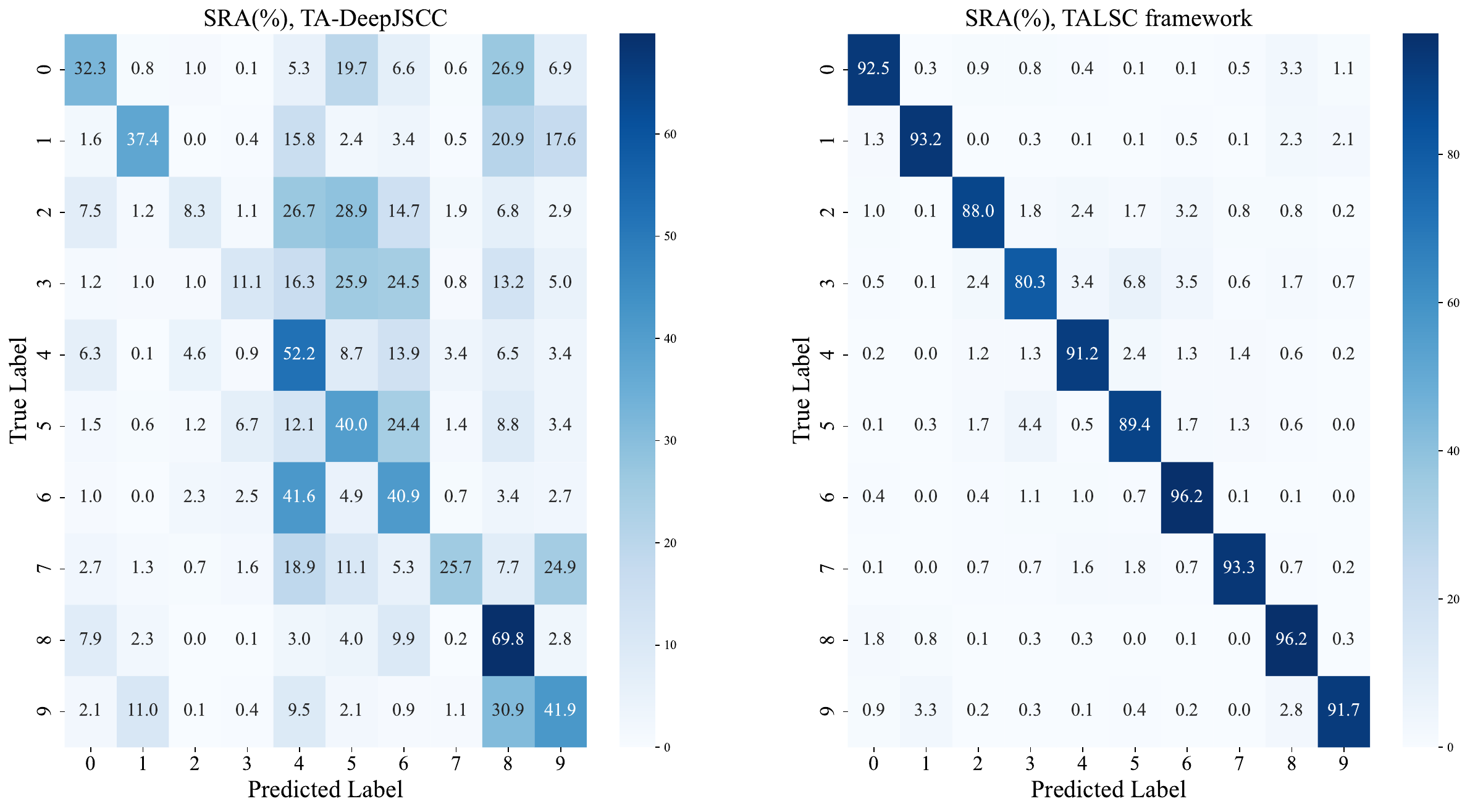}
	\caption{Confusion matrices of TA-DeepJSCC and TALSC framework over AWGN channel, SNR=8 dB.}
	\label{cf-mat}
	\vspace{-8pt}
\end{figure}

Fig. \ref{cf-mat} shows the confusion matrices for TA-DeepJSCC and TALSC framework, qualitatively illustrating the impact of label flipping noise on the models. 
The test conditions include an SNR of 8 dB over AWGN channel, and FNR is 0.4.     
Semantic codecs are shown to be highly susceptible to label flipping noise, as evidenced by the low accuracy (around 10\%) for categories 2 and 3, indicating that the semantic codec struggles to extract their semantic information. 
In contrast, the TALSC framework, aided by the SCM, effectively suppresses low-significance noise samples, thereby enhancing the robustness of semantic transmission.

\begin{figure}[t]
    \subfloat[AWGN channel.]{
        \includegraphics[width=0.85\columnwidth]{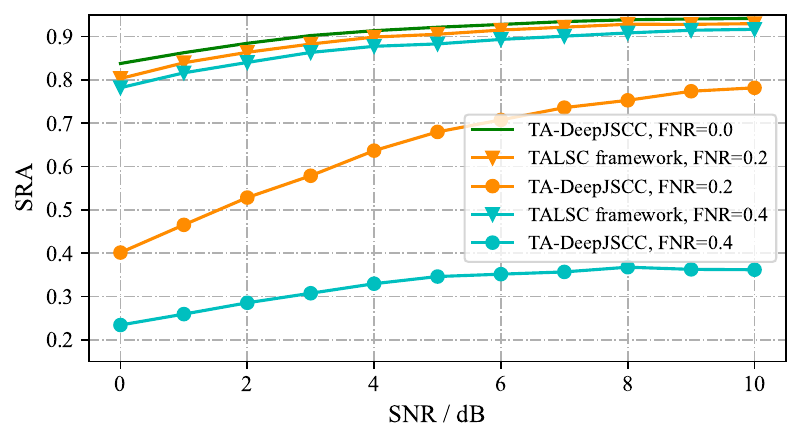}
        \label{fig:awgn}
    }
    \vspace{-8pt}
    \subfloat[Rician channel, Rician factor $K=8$.]{
        \includegraphics[width=0.85\columnwidth]{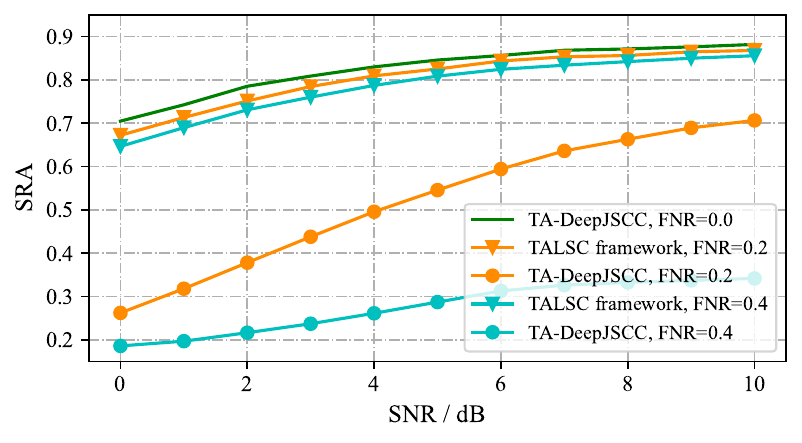}
        \label{fig:rician}
    }
    \vspace{-8pt}
    \subfloat[Rayleigh channel.]{
        \includegraphics[width=0.85\columnwidth]{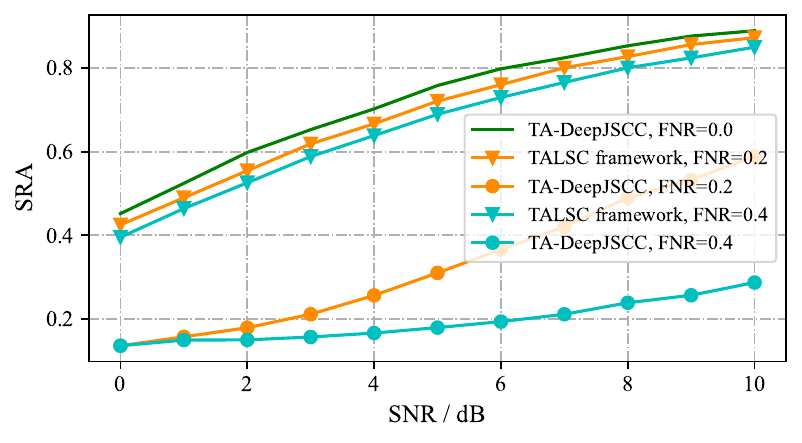}
        \label{fig:rayleigh}
    }
    \caption{Task performance under AWGN and fading channels with flipping noise in KB, our TALSC framework significantly outperforms the TA-DeepJSCC with FNR= 0.2 and 0.4.}
    \label{sra}
    \vspace{-8pt}
\end{figure}

We further evaluated the performance of the TALSC framework over AWGN and fading channels. 
As shown in Fig. \ref{sra}, our TALSC framework shows stronger robustness than TA-DeepJSCC. 
As the FNR increases, the SRA of DeepJSCC is significantly affected. 
However, the TALSC framework can still maintain a relatively high SRA and perform closely  when FNR = 0.

\begin{figure}[t]
	\includegraphics[width=0.85\columnwidth]{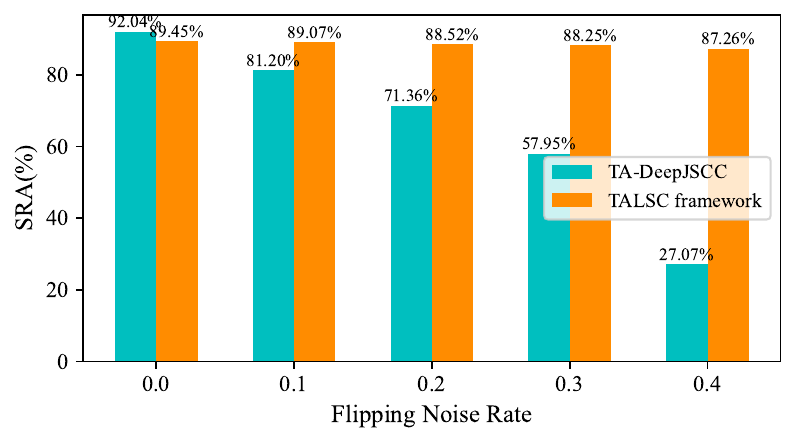}
	\caption{The SRA at different FNRs.}    
	\label{nr}
	\vspace{-10pt}
\end{figure}

The simulation results of different FNRs in the KB on the semantic coding network is shown in Fig. \ref{nr}.
We compare the performance of TA-DeepJSCC  and our TALSC, and it can be observed that our TALSC framework performs slightly worse than the baseline model only when the FNR=0, indicating no flipping noise.
In this case, our TALSC framework achieves an accuracy of 89.45\%, while the baseline achieves an accuracy of 92.04\%.
However, as the FNR increases from 0 to 0.4, the accuracy of TA-DeepJSCC drops by 64.97\%, while TALSC framework only experiences a decrease of 2.19\%.

\subsubsection{KB with imbalanced data}

In practical applications, the data collected is typically imbalanced, with some categories having significantly more samples than others \cite{imb}.  
However, for some minority categories where data is hard to obtain, they contain information that cannot be overlooked. 
In semantic transmission, semantic codecs often inadequately capture the semantic features of minority categories, resulting in suboptimal performance for data from these classes in downstream tasks. 
To address this, we simulate a class-imbalanced dataset based on CIFAR-10 and enhance the representation of minority categories in semantic codecs.

We constructs a data distribution for $n$ classes according to the imbalance factor $f$. Assuming that the class with the largest number of samples contains $N$ data samples, the number distribution of each category is constructed as follows (in descending order):
\[
\left[N,\frac{N}{f^{\frac{1}{n - 1}}},\frac{N}{f^{\frac{2}{n - 1}}},\cdots,\frac{N}{f^{\frac{k}{n - 1}}},\cdots,\frac{N}{f}\right].
\]

\begin{figure}[t]
    \centering
    \subfloat[The F1-score at different SNRs.]{
        \includegraphics[width=0.85\columnwidth]{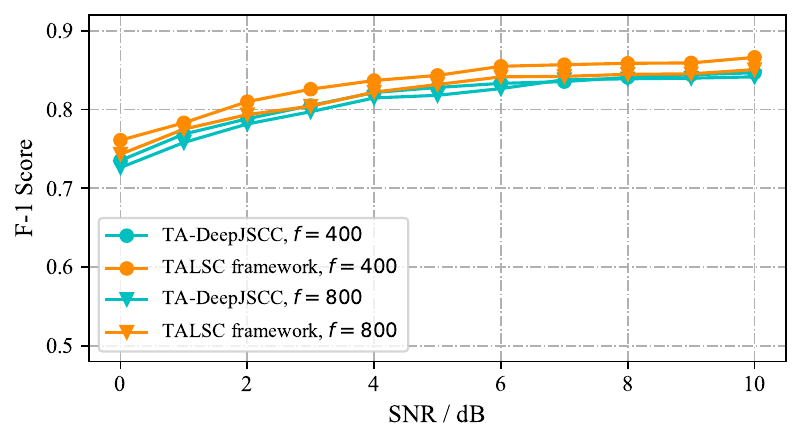}
        \label{fig:awgn_f1}
    }  
    \vspace{-8pt}
    \subfloat[The MS-SSIM for underrepresented class at different SNRs.]{
        \includegraphics[width=0.85\columnwidth]{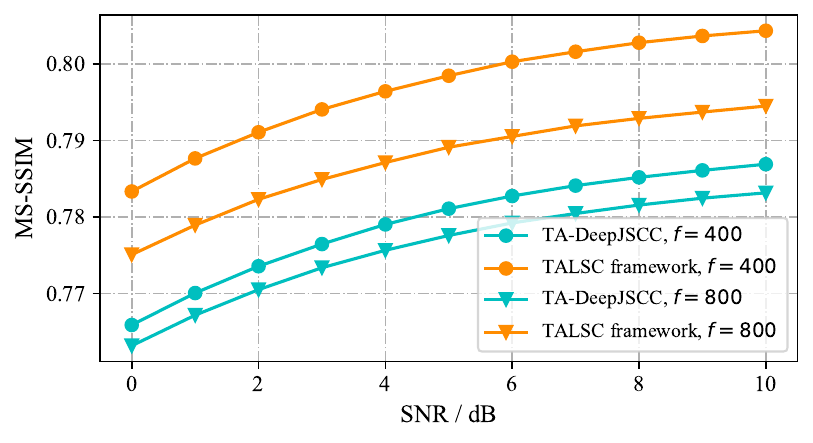}
        \label{fig:awgn_ssim}
    }
    
    \caption{Performance under AWGN channel with imbalanced KB, our TALSC framework has slightly higher F-1 score and significantly better MS-SSIM compare to the TA-DeepJSCC.}
    \label{awgn_imb}
    \vspace{-8pt}
\end{figure}

\begin{figure}[t]
	\centering
	\includegraphics[width=0.85\columnwidth]{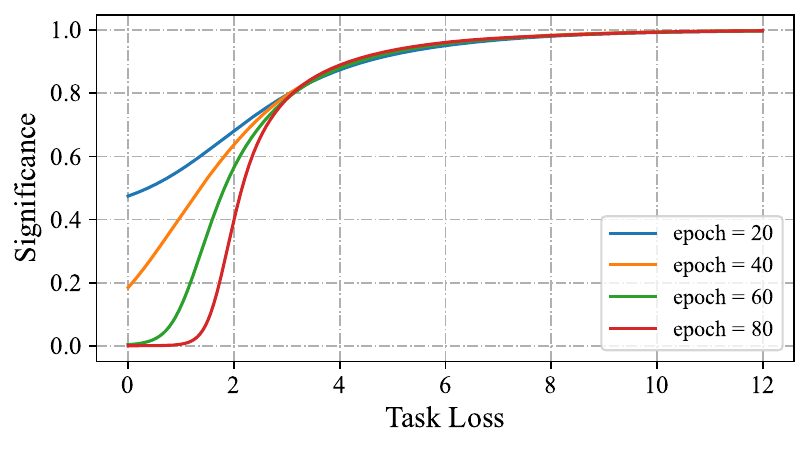}
	\caption{The significance evaluation function evolution under imbalanced KB.}
	\label{evol-imb}
	\vspace{-10pt}
\end{figure}

Fig. \ref{awgn_imb} presents a comparison of the performance of the TALSC framework and TA-DeepJSCC with $f=400$ and $f=800$, respectively. 
To evaluate the transmission capability of the semantic coding networks for minority categories in KB, we adopt the F1-score metric. 
Specifically, we assess the semantic representation capability of the semantic coding networks for minority categories by calculating the F1-score in the receiver’s pragmatic task. 
Simulation results indicate that the TALSC framework effectively enhances the representation and transmission quality of minority categories.

\subsubsection{Evolution of SEF in SCM}

The update of SCM in Eq. (\ref{update-eq}) indicates that the sample significance that is more closely consistent with the meta-knowledge will be strengthened, while the sample weights that deviate from such meta-knowledge will be suppressed. 
In this part, we observe this phenomenon by analyzing the update process in the case of class imbalance and label flipping noise and provide a reasonable explanation.

\begin{figure*}[t]
    \centering
    \subfloat[]{ 
        \includegraphics[width=0.48\linewidth]{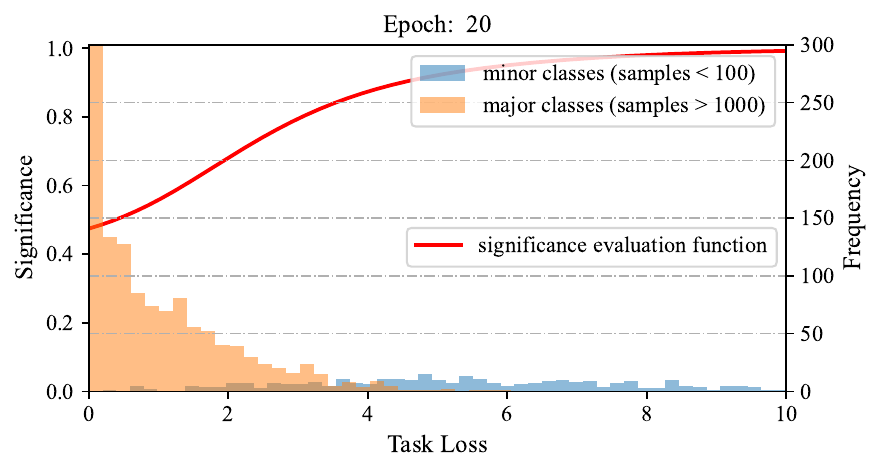}
        \label{fig:imb20}
    }
    \subfloat[]{
        \includegraphics[width=0.48\linewidth]{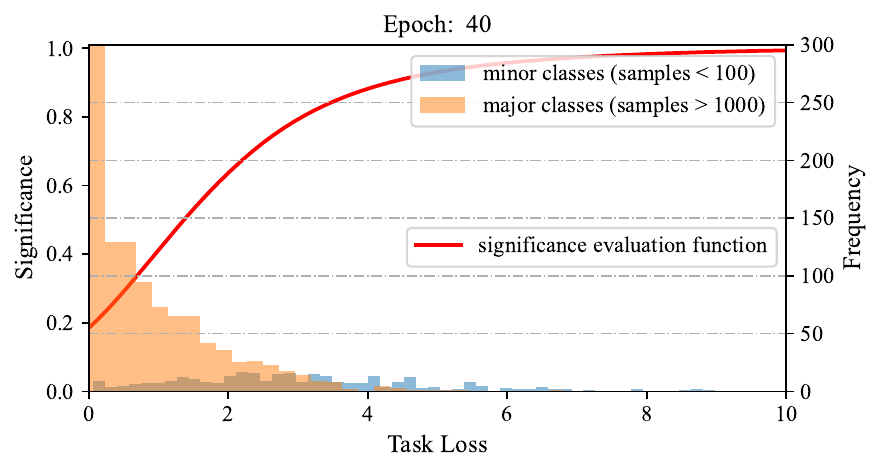}
        \label{fig:imb40}
    } 
    \vspace{-8pt}
    \subfloat[]{
        \includegraphics[width=0.48\linewidth]{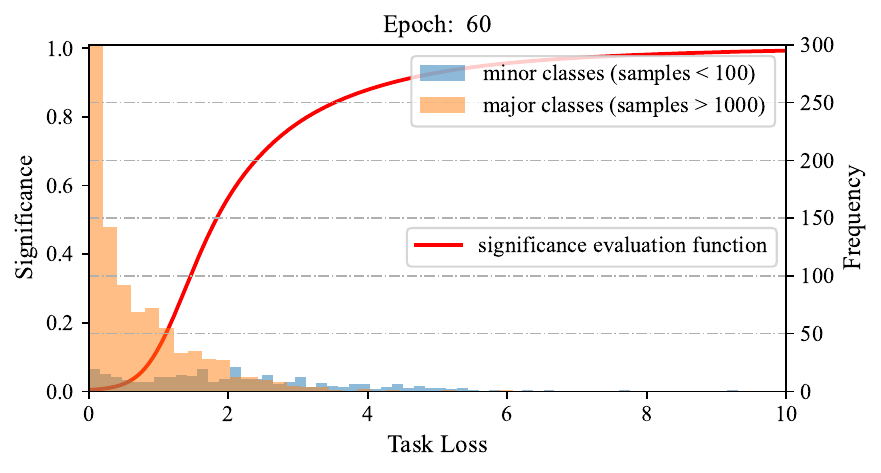}
        \label{fig:imb60}
    }
    \subfloat[]{
        \includegraphics[width=0.48\linewidth]{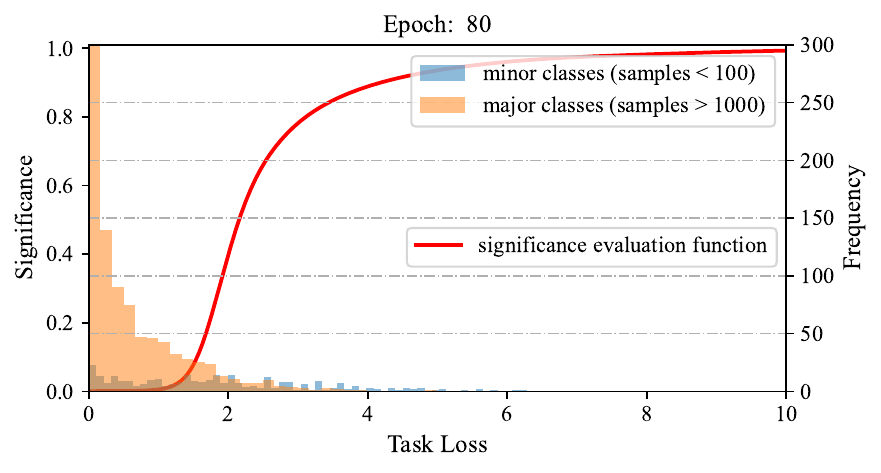}
        \label{fig:imb80}
    }
    \caption{
    Loss distribution across different training stages under class imbalance. Subfigures (a)–(d) illustrate the evolution of loss distribution alongside the corresponding progression of significance evaluation function.   
    }
    \label{loss-dist-imb}
    \vspace{-8pt}
\end{figure*}

Fig. \ref{evol-imb} shows the evolution of SEF under class imbalance with an imbalance factor $f=400$. 
The significance of samples with large losses increases over time, indicating that the SCM applies a weighting strategy that prioritizes samples with higher task losses. 
These samples are typically from minority categories that have been insufficiently trained. 
By assigning greater significance to such samples, the semantic codec is compelled to learn more extracting semantic information from underrepresented categories, thereby enhancing performance for these classes in downstream tasks.

We further present the loss distribution of the pragmatic task at different stages under class imbalance as shown in Fig. \ref{loss-dist-imb}. 
The goal is for the data from minority categories to be fully trained, meaning that the loss distribution of minority category samples should closely resemble that of majority category samples. 
We can observe that, with the TALSC framework, the monotonically increasing SEF causes the loss distribution of minority category samples to gradually converge toward that of the majority category as training progresses from Figs. \ref{fig:imb20} to \ref{fig:imb80}.    

\begin{figure}[t]
	\centering
	\includegraphics[width=0.9\columnwidth]{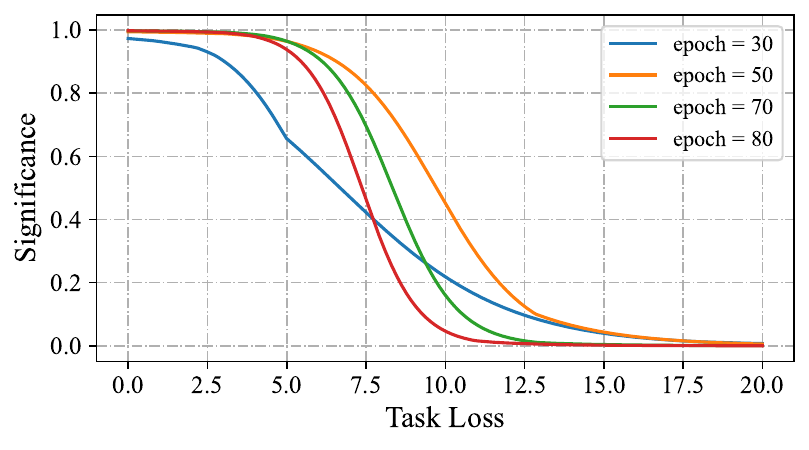}
	\caption{The significance evaluation function evolution under KB with flipping noise.}    
	\label{evol-nr}
	\vspace{-16pt}
\end{figure}

Fig. \ref{evol-nr} shows the evolution of  SEF in SCM under label flipping noise with FNR is $p=0.4$. 
Samples with large losses are suppressed, which indicates that the SCM adopts the weighting strategy to assign smaller significance to these high-loss samples, which are likely biased and can mislead the semantic coding network.  
In contrast, samples with smaller task losses are more likely to be pure and unbiased.

\begin{figure}[t]
    \subfloat[Loss distribution for TA-DeepJSCC.]{
        \includegraphics[width=0.9\columnwidth]{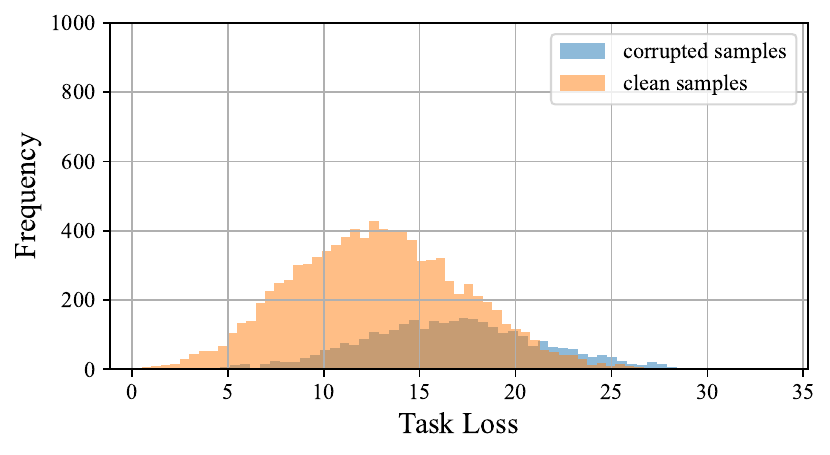}
        \label{fig:ta_deepjscc}
    } 
    \vspace{-8pt}
    \subfloat[Loss distribution for TALSC framework.]{
        \includegraphics[width=0.9\columnwidth]{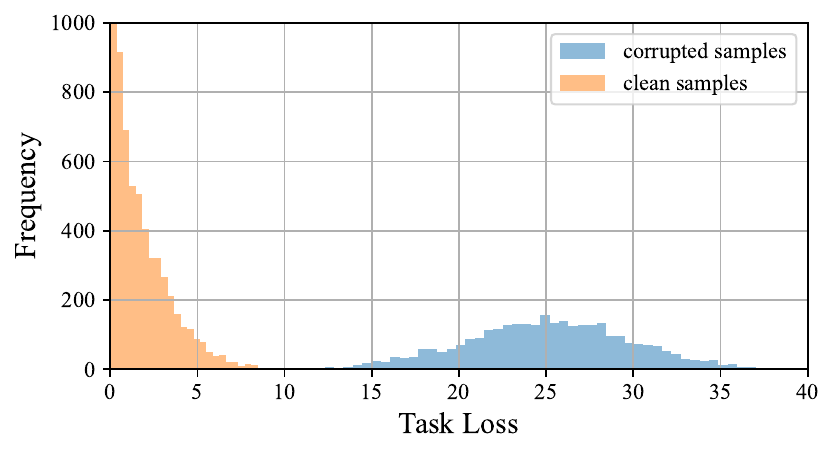}
        \label{fig:talsc}
    }
    \caption{Loss distribution of clean samples and flipped samples under FNR=0.2.}
    \label{loss-dist-nr}
    \vspace{-8pt}
\end{figure}

Fig. \ref{loss-dist-nr} illustrates the loss distribution of clean and flipped samples under label flipping noise. 
For the TA-DeepJSCC, the losses of clean and flipped samples are highly mixed and difficult to distinguish, as shown in Fig. \ref{fig:ta_deepjscc}. 
In contrast, with the TALSC framework, the significance evaluation effectively separates clean from flipped samples, suppressing high-loss flipped samples. 
As shown in Fig. \ref{fig:talsc},  the loss distributions become clearly separated, demonstrating the improved ability of the model to differentiate between clean and noisy data.

\subsubsection{SCM Parameterization: KAN vs. MLP}

\begin{figure}[t]
	\centering
	\includegraphics[width=0.98\columnwidth]{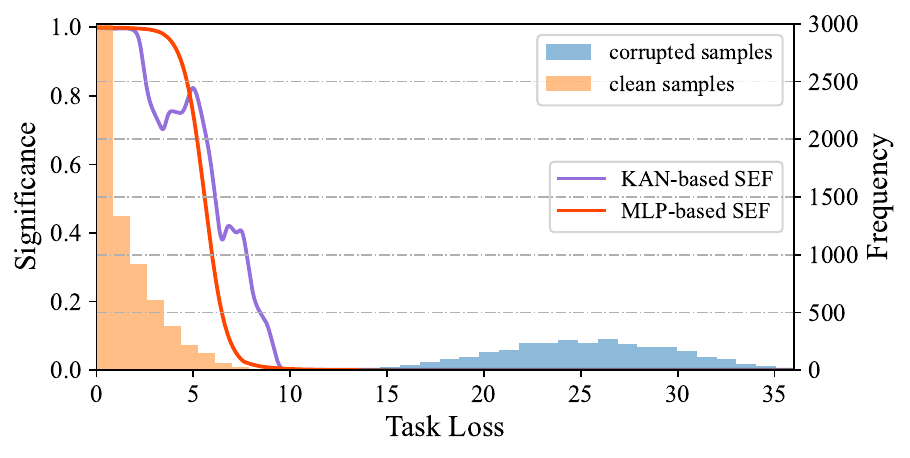}
	\caption{Comparison of the significance evaluation function obtained by parameterizing with MLP and KAN respectively under FNR=0.4.}    
	\label{kan_mlp}
	\vspace{-8pt}
\end{figure}

\begin{figure}[t]
	\centering
	\includegraphics[width=0.9\columnwidth]{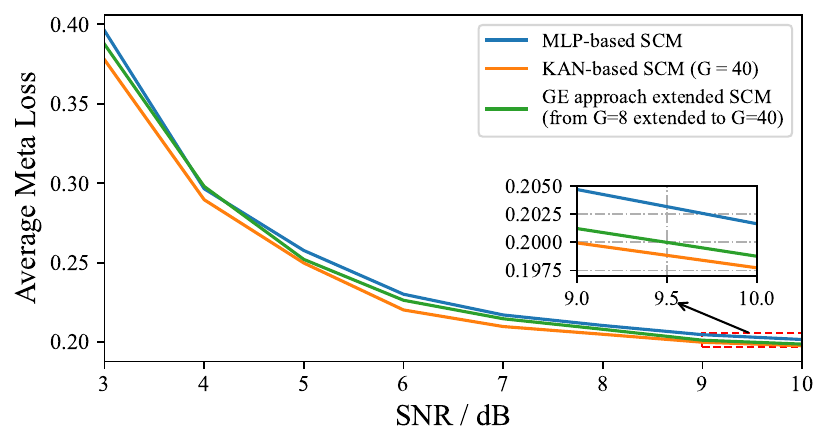}
	\caption{Meta-loss comparison of different parameterization schemes for SCM.}    
	\label{kan_ge_perform}
	\vspace{-8pt}
\end{figure}

Fig. \ref{kan_mlp} compares the SEF obtained by parameterizing the SCM with MLP and KAN under label flipping noise. 
It can be observed that the MLP-based SCM exhibits a relatively simple form, while the KAN-based SCM captures a more complex function that resembles a step function. 
This difference arises from the strong nonlinear representation capability of the KAN network.

Moreover, while the MLP-based SCM can achieve a similar level of function representation precision by increasing the number of hidden layer nodes, this requires retraining to obtain new parameter sets. 
In contrast, the KAN-based SCM incorporates GE approach, enabling it to adjust the precision of SCM without the need for retraining, thus offering a more efficient and flexible solution for \emph{meta-learner} optimization.
Fig. \ref{kan_ge_perform} further illustrates the meta-loss comparison of SEFs parameterized using schemes as follows:

a) MLP-based SCM: trained for 80 epochs.

b) KAN-based SCM: trained for 80 epochs with grid size $G = 40$.

c) GE approach extended SCM: first train with a KAN with grid size $G = 8$ for 75 epochs, then the GE approach extends the grid size to $G = 40$ by Eq. (\ref{ge}), and finally fine-tune for 5 epochs.

It is worth noting that a) MLP-based SCM and b) KAN-based SCM schemes possess a comparable number of parameters. The results show that the SCM parameterized by KAN has stronger representation ability and can better reduce the meta-loss, as shown in Fig. \ref{kan_ge_perform}. 
The c) GE approach extended SCM scheme shows that GE approach can effectively improve training efficiency while maintaining similar performance to b) KAN-based SCM scheme.

\section{Conclusion} \label{conclusion}

In this paper, we have proposed the TALSC framework for robust image semantic transmission, which integrated an SCM as the \emph{meta-learner} and a semantic coding networks as the \emph{learner}.    
Our TALSC framework has leveraged a well-designed SEF in SCM to map task loss to sample significance, enabling iterative updates of the \emph{learner} to suppress low-significance samples arising from label flipping noise and class imbalance, while emphasizing task-relevant samples.
To optimize the significance evaluation process in SCM, we have formulated a bi-level optimization problem and refined the update mechanism in meta-learning.    
For the construction of SEF in SCM, we have first used MLPs due to their practicality.     
Recognizing the efficiency and flexibility limitations of MLPs, we have proposed leveraging KANs to implement the SEF, and utilized their spline-based structure to reduce meta-loss in the optimization of the \emph{meta-learner}.     
Furthermore, to enhance flexibility and adaptability for SEF, we have proposed the GE approach for KAN-based SCM, which begins by constructing a coarse grid SEF using a low-dimensional parameter set.     
Then, we have derived an analytical transformation matrix to extend the coarse grid SEF to a finer resolution, guided by the MMSE criterion.     
Subsequent fine-tuning further refined the SEF to achieve optimal meta-loss performance.
Through extensive simulations, we have demonstrated that the TALSC framework effectively mitigates KB biases, achieving high SRA.     
Additionally, the framework can adapt to diverse precision requirements for the SEF, while the GE approach ensures computational efficiency and scalability for the SCM.     
These results have highlighted the potential of the TALSC framework in improving semantic transmission performance across a wide range of practical scenarios, offering robust solutions to the challenges posed by label noise, class imbalance, and task variability.

\bibliographystyle{IEEEtran}

\end{document}